\newtheorem{thm}{Theorem}[section]
\newtheorem{lem}{Lemma}[section]
\newtheorem{cor}{Corollary}[section]
\newtheorem{prop}{Proposition}[section]
\theoremstyle{definition}
\newtheorem{defn}{Definition}[section]
\newtheorem{remark}{Remark}
\newcolumntype{C}[1]{>{\centering\arraybackslash}p{#1}}
\begin{document}

\title{Bounds on Longest Simple Cycles in Weighted Directed Graphs via
  Optimum Cycle Means\thanks{The core ideas of this work were
    developed during the author's graduate research in the late
    1990s.}}

\author{Ali Dasdan\thanks{The author acknowledges the use of AI assistants (Claude, Gemini) for language editing and reference checking. The scientific content, analysis, and conclusions are exclusively the author's work.}\\
  KD Consulting\\
  Saratoga, CA, USA\\
  alidasdan@gmail.com\\
}

\maketitle

\begin{abstract}
The problem of finding the longest simple cycle in a directed graph is
NP-hard, with critical applications in computational biology,
scheduling, and network analysis. Existing approaches include exact
algorithms with exponential runtimes, approximation algorithms limited
to specific graph classes, and heuristics with no formal
guarantees. In this paper, we exploit \textit{optimum cycle means}
(minimum and maximum cycle means), computable in strongly polynomial
time, to derive both \textit{strict bounds} and \textit{heuristic
  estimates} for the weight and length of the longest simple cycle in
general graphs. The strict bounds can prune search spaces in exact
algorithms while the heuristic estimates (the arithmetic mean
$\lambda_{avg}$ and geometric mean $\lambda_{geo}$ of the optimum
cycle means) guarantee bounded approximation error. Crucially, a
single computation of optimum cycle means yields both the bounds and
the heuristic estimates. Experimental evaluation on ISCAS benchmark
circuits demonstrates that, compared to true values, the strict
algebraic lower bounds are loose (median 80--99\% below) while the
heuristic estimates are much tighter: $\lambda_{avg}$ and
$\lambda_{geo}$ have median errors of 6--13\% vs.\ 11--21\% for
symmetric (uniform) weights and 41--92\% vs.\ 25--35\% for skewed
(log-normal) weights, favoring $\lambda_{avg}$ for symmetric
distributions and $\lambda_{geo}$ for skewed distributions.
\end{abstract}

\section{Introduction}
\label{sec:intro}

Finding the longest cycle in a graph (with $n$ nodes and $m$ edges) is
a fundamental combinatorial optimization problem with applications in
circuit design, scheduling, and network analysis. While the shortest
cycle can be found in polynomial time \citep{OrSe2017}, finding the
longest simple cycle (maximum length or maximum weight) is NP-hard
\citep{Ga79}. Moreover, the problem resists efficient approximation:
it cannot be approximated within a factor of $n^{1-\epsilon}$ for any
$\epsilon > 0$ unless P=NP \citep{BjHuKh04}. Given this hardness,
practitioners rely on heuristics, branch-and-bound algorithms
\citep{MoJaSaSe16,StPuFe14,CoStFe20}, or exhaustive enumeration on
small graphs using exponential-time algorithms
\citep{Jo75,HaJa08,BaSa25}. Polynomial-time computable bounds on the
longest cycle can aid all these approaches by pruning search spaces,
verifying feasibility, or providing quick estimates.

This paper investigates the utility of \textit{Optimum Cycle Means},
specifically the minimum ($\lambda_{min}$) and maximum
($\lambda_{max}$) cycle means, as a basis for such bounds. \textit{The
  key insight is that the cycle mean of any cycle, including the
  longest, must lie between the minimum and maximum cycle
  means}. Since these optimum cycle means can be computed efficiently
in theory (strongly polynomial $O(nm)$ time) and in practice
\citep{Da04}, they offer an attractive trade-off between computational
cost and bounding power. Crucially, a single computation yields both
the strict bounds and the heuristic estimates.

However, effectively leveraging this framework requires distinguishing
between the \textit{strict bounds} derived from critical cycles and
the \textit{heuristic estimates} based on cycle means. To evaluate the
robustness of these estimates, we consider multiple edge weight
distributions. Specifically, in addition to uniform weights, we
evaluate our bounds on graphs with log-normally distributed
weights. This is motivated by applications in digital circuit timing
analysis, where path delays often follow a log-normal distribution due
to the multiplicative nature of process variations \citep{KhSr03}.

Our investigation yields two distinct findings:
\begin{enumerate}
\item \textbf{Strict bounds are guaranteed but loose:}
  We derive algebraic bounds linking longest cycles to ``critical
  cycles'' (those achieving optimum cycle means). While theoretically
  sound, critical cycles tend to be topologically short, making the 
  resulting lower bounds often loose (median 80--99\% below true values).
\item \textbf{Heuristic estimates are tighter with bounded error:} We
  propose two heuristic estimates: the arithmetic mean
  $\lambda_{avg}$ and the geometric mean $\lambda_{geo}$ of the
  optimum cycle means. These guarantee bounded approximation error, 
  with accuracy depending on distribution skewness: $\lambda_{avg}$ 
  is preferable for symmetric (uniform) distributions, while
  $\lambda_{geo}$ is preferable for skewed (log-normal) distributions.
\end{enumerate}

Our specific contributions are as follows:
\begin{itemize}
    \item We derive formal algebraic bounds on the weight, length, and
      mean of max-weight and max-length cycles based on optimum cycle
      means (\S~\ref{sec:graph}--\S~\ref{sec:extensions});
    \item We establish dual results for shortest cycles
      (\S~\ref{sec:shortest-results});
    \item We propose heuristic estimates and bound tightness
      analysis (\S~\ref{sec:heuristic});
    \item We provide an extensive experimental evaluation on ISCAS
      benchmark graphs with both uniform and log-normal edge weights,
      analyzing how distribution properties affect bound quality
      (\S~\ref{sec:benchmark-results});
    \item We discuss the limitations and applications of the proposed
      approach (\S~\ref{sec:limits} and \S~\ref{sec:applications}).
\end{itemize}

Related work is discussed in \S~\ref{sec:related-work}, and we
summarize our conclusions and avenues for future research in
\S~\ref{sec:conclusions}.

\section{Graph Basics}
\label{sec:graph}

Let $G = (V, E, w)$ be a finite directed graph with $|V| = n$ nodes
and $|E| = m$ directed edges (arcs). The weight function $w: E \to
\mathbb{R}$ maps edges to real values (negative, zero, or
positive). We denote the weight of an edge $e=(u,v)$ as $w(e)$ or
$w(u,v)$. The minimum and maximum individual edge weights are denoted
by $w_{min}$ and $w_{max}$, respectively.

Without loss of generality, we assume $G$ is strongly connected and
contains at least one cycle. If $G$ is not strongly connected, the
analysis can be applied independently to each Strongly Connected
Component (SCC), which can be identified in $O(n+m)$ time
\citep{MeSa08}.

The weight $w(C)$ of a cycle $C$ is the sum of its edge weights, and
its length $|C|$ is the number of edges it contains. A cycle is termed
positive, zero, or negative based on the sign of $w(C)$. A cycle of
length 1 is a self-loop.

\section{Optimum Cycle Means}
\label{sec:opt-means}

\begin{defn}
  \label{def:means}
The \textit{cycle mean} $\lambda(C)$ of a cycle $C$ is its average
edge weight:
\begin{equation}
  \label{eq:means}
  \lambda(C) = \frac{\sum\limits_{e\in C} w(e)}{|C|} = \frac{w(C)}{|C|},
\end{equation}
where $|C| \geq 1$.
\end{defn}

\begin{defn}
  \label{def:opt-means}
The \textit{maximum cycle mean} $\lambda_{max}$ and \textit{minimum
  cycle mean} $\lambda_{min}$ of $G$ are the optimum values of cycle
means over all cycles in $G$. Formally,
\begin{equation}
  \label{eq:opt-means}
  \lambda_{max} = \max\limits_{C\in G} \lambda(C) \quad \text{and}
  \quad \lambda_{min} = \min\limits_{C\in G} \lambda(C).
\end{equation}
\end{defn}
Both are computable efficiently in $O(mn)$ time \citep{Da04}, without
needing to enumerate over all cycles in $G$ explicitly.

Optimum cycle means admit a linear programming formulation. For
example, $\lambda_{min}$ is the maximum $\lambda$ satisfying:
\begin{equation}
  d(v) - d(u) \leq w(u,v) - \lambda, \quad \forall (u,v) \in E,
\end{equation}
where $d(v)$ represents a potential function on nodes. At optimality,
$\lambda=\lambda_{min}$, $d(v)$ becomes the shortest path distance
from an arbitrary source node under the modified weights
$w'(u,v)=w(u,v) - \lambda_{min}$, and $G$ contains no negative cycles
under these modified weights.

\section{Criticality}
\label{sec:criticality}

An edge $(u,v)$ is termed \textit{min-critical} if it lies on a
critical path in the linear program above, satisfying
$d(v)-d(u)=w(u,v)-\lambda_{min}$. A cycle $C$ is min-critical if
$\lambda(C)=\lambda_{min}$; we denote such a cycle by $C_{min}$. The
subgraph $G_{min}$ containing all min-critical edges is the
min-critical subgraph. Analogous definitions apply for
$\lambda_{max}$, max-critical cycles ($C_{max}$), and the max-critical
subgraph.

Multiple critical cycles may exist with the same cycle mean but
different lengths. Since our bounds involve $|C_{min}|$ and
$|C_{max}|$, we can optimize over all critical cycles:
\begin{itemize}
  \item For \textit{lower bounds}, take $\max$ over all critical
    cycles to obtain the tightest (largest) lower bound.
  \item For \textit{upper bounds}, take $\min$ over all critical
    cycles to obtain the tightest (smallest) upper bound.
\end{itemize}
While finding the longest or shortest critical cycle is also NP-hard,
optimum cycle mean algorithms naturally produce at least one critical
cycle as a witness. Running multiple algorithms or examining the
critical subgraph may yield critical cycles of varying lengths,
enabling tighter bounds.

\section{Cycle Mean Inequalities}
\label{sec:basic-results}

The definition of optimum cycle means implies the following
fundamental bounds for any cycle $C$:
\begin{lem}[Basic cycle mean bounds]
\label{lem:basic-bounds}
For any cycle $C$ in $G$:
\begin{equation}
  \label{eq:anycycle1}
  w_{min} \leq \lambda_{min} \leq \frac{w(C)}{|C|} \leq \lambda_{max} \leq w_{max},
\end{equation}
or equivalently,
\begin{equation}
  \label{eq:anycycle2}
  |C|w_{min} \leq |C|\lambda_{min} \leq w(C) \leq |C|\lambda_{max} \leq |C|w_{max}.
\end{equation}
\end{lem}
\begin{proof}
The inequalities $\lambda_{min} \leq \lambda(C) \leq \lambda_{max}$
follow by definition. The outer bounds hold because no cycle mean can
exceed the maximum individual edge weight or be less than the minimum
individual edge weight.
\end{proof}

As mentioned in \S~\ref{sec:graph}, we assume $G$ is strongly
connected, meaning it has at least one cycle. Hence, it is also
guaranteed that $G$ contains at least one min-critical cycle $C_{min}$
and one max-critical cycle $C_{max}$, such that $\lambda_{min} =
\lambda(C_{min})$ and $\lambda_{max} = \lambda(C_{max})$.

\section{Longest Cycle Bounds: Basics}
\label{sec:longest-basics}

We distinguish two notions of ``longest'' cycle: a \textit{max-weight
cycle} maximizes total weight $w(C)$, while a \textit{max-length cycle}
maximizes edge count $|C|$.

Let $L$ denote a max-weight or a max-length cycle. Trivial bounds on
$w(L)$ can be derived from edge weights alone:
\begin{equation}
  |L|w_{min} \leq w(L) \leq |L|w_{max}.
\end{equation}
However, optimum cycle means provide tighter constraints:
\begin{lem}[Cycle mean bounds for longest cycles]
\label{lem:longest-mean-bounds}
Let $L$ be either a max-weight or max-length cycle in $G$. Then:
\begin{equation}
  \label{eq:longest1}
  \lambda_{min} \leq \lambda(L) \leq \lambda_{max},
\end{equation}
which implies:
\begin{equation}
  \label{eq:longest2}
  |L|\lambda_{min} \leq w(L) \leq |L|\lambda_{max}.
\end{equation}
\end{lem}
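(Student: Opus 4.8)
The plan is to recognize that \eqref{eq:longest1} and \eqref{eq:longest2} are immediate specializations of Lemma~\ref{lem:basic-bounds}, so the proof should be very short. The conceptual crux --- the only thing worth stating explicitly --- is that a max-weight cycle and a max-length cycle are, before anything else, \emph{cycles} in $G$: the optimization criterion that distinguishes them plays no role in bounding their mean. Hence the first step is simply to observe that whichever notion $L$ realizes, $L$ is a single cycle, so $\lambda(L)$ is well-defined through Definition~\ref{def:means}.

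Next I would instantiate the general bound $\lambda_{min} \leq \lambda(C) \leq \lambda_{max}$ of Lemma~\ref{lem:basic-bounds} at $C = L$. This yields \eqref{eq:longest1} outright, since by the defining property of $\lambda_{min}$ and $\lambda_{max}$ as the extreme values of $\lambda$ over all cycles, the mean $\lambda(L)$ of the particular cycle $L$ must lie between them.

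To derive \eqref{eq:longest2} from \eqref{eq:longest1}, I would multiply through by $|L|$ and use $\lambda(L) = w(L)/|L|$ from Definition~\ref{def:means}. Because every cycle satisfies $|L| \geq 1 > 0$, the multiplication is by a strictly positive quantity and therefore preserves the direction of both inequalities, giving $|L|\lambda_{min} \leq w(L) \leq |L|\lambda_{max}$.

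The main obstacle is that there is essentially none: once Lemma~\ref{lem:basic-bounds} is available, the result is definitional. The only points requiring any care are rhetorical rather than mathematical --- underscoring that max-weight/max-length optimality is irrelevant to the inequality, and noting the trivial fact $|L| > 0$ that legitimizes the sign-preserving multiplication. Accordingly, I would resist padding the argument and keep it to a couple of sentences.
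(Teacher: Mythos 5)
Your proof is correct and matches the paper's approach exactly: the paper likewise proves this lemma as a direct application of Lemma~\ref{lem:basic-bounds} to the specific cycle $L$. Your added remarks --- that the optimality criterion is irrelevant and that multiplying by $|L| \geq 1 > 0$ preserves the inequalities --- are sound elaborations of the same one-line argument.
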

\begin{proof}
This is a direct application of Lemma~\ref{lem:basic-bounds} to the
specific cycle $L$.
\end{proof}

While the exact value of $|L|$ is unknown,
Lemma~\ref{lem:longest-mean-bounds} establishes necessary conditions
for the weight of $L$:
\begin{cor}[Sign conditions]
  \label{cor:sign-conditions}
\mbox{}\par
\begin{enumerate}
  \item $\lambda_{min} > 0 \Rightarrow w(L) > 0$.
  \item $\lambda_{max} < 0 \Rightarrow w(L) < 0$.
  \item $\lambda_{min} = 0 \Rightarrow w(L) \geq 0$.
  \item $\lambda_{max} = 0 \Rightarrow w(L) = 0$.
\end{enumerate}
\end{cor}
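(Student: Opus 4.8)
The plan is to derive each of the four sign conditions directly from the inequality chain in Lemma~\ref{lem:longest-mean-bounds}, equation~\eqref{eq:longest2}, namely $|L|\lambda_{min} \leq w(L) \leq |L|\lambda_{max}$. The key observation that makes everything work is that $L$ is a genuine cycle, so its length satisfies $|L| \geq 1$, and in particular $|L| > 0$. This positivity of the multiplier $|L|$ is what lets me transfer the sign of $\lambda_{min}$ or $\lambda_{max}$ onto the products $|L|\lambda_{min}$ and $|L|\lambda_{max}$ without any sign flip.

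Let me sketch the four cases. For item~(1), if $\lambda_{min} > 0$, then since $|L| > 0$ I get $|L|\lambda_{min} > 0$, and combining with the left inequality $w(L) \geq |L|\lambda_{min}$ yields $w(L) > 0$. Item~(2) is the mirror image: if $\lambda_{max} < 0$, then $|L|\lambda_{max} < 0$, and the right inequality $w(L) \leq |L|\lambda_{max}$ gives $w(L) < 0$. For item~(3), $\lambda_{min} = 0$ forces $|L|\lambda_{min} = 0$, so $w(L) \geq 0$ follows from the left inequality. Item~(4) is the subtle one: $\lambda_{max} = 0$ gives $w(L) \leq 0$ from the right inequality, but to conclude equality I additionally need a lower bound. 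Here I would invoke that $\lambda_{max} = \max_C \lambda(C) = 0$ implies $\lambda_{min} \leq \lambda_{max} = 0$; more sharply, I should check whether $w(L) \geq 0$ also holds. The cleanest argument is to note that $\lambda_{max}=0$ means the maximum cycle mean over all cycles is $0$, so every cycle (including $L$) has $\lambda(L) \leq 0$, hence $w(L) \leq 0$; combined with the fact that $L$ being a max-weight cycle satisfies $w(L) \geq w(C)$ for the specific witness cycle $C_{max}$ achieving $\lambda(C_{max}) = \lambda_{max} = 0$, so $w(L) \geq w(C_{max}) = |C_{max}|\lambda_{max} = 0$, giving $w(L) = 0$.

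The main obstacle I anticipate is item~(4), and specifically the max-length versus max-weight distinction. When $L$ is the max-weight cycle, the argument above is airtight because $L$ dominates the critical cycle $C_{max}$ in weight, pinning $w(L) \geq 0$. When $L$ is instead the max-length cycle, I cannot directly claim $w(L) \geq w(C_{max})$, so the lower bound must come purely from $|L|\lambda_{min} \leq w(L)$; this only yields $w(L) = 0$ if $\lambda_{min} = \lambda_{max} = 0$, i.e.\ all cycle means are zero. I would therefore flag that item~(4) as stated is cleanest under the max-weight interpretation, or alternatively that $\lambda_{max} = 0$ together with $L$ maximizing weight forces both $w(L) \leq 0$ and $w(L) \geq 0$. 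I expect the author's proof to resolve this by treating $L$ as the max-weight cycle (so it weakly dominates any critical cycle), which is the natural reading and sidesteps the max-length edge case entirely.

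Once that distinction is settled, the remaining three items are immediate sign propagations through the $|L| > 0$ multiplier and require no nontrivial machinery beyond Lemma~\ref{lem:longest-mean-bounds}. I would write the proof as a short case analysis, handling items~(1)--(3) in one or two lines each and devoting the bulk of the argument to carefully justifying the two-sided pinch in item~(4).
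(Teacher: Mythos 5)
Your proposal is correct, and for items (1)--(3) it coincides with the paper's reasoning: the paper attaches no proof to the corollary at all, presenting it as an immediate consequence of Lemma~\ref{lem:longest-mean-bounds}, which is exactly your sign propagation through the positive multiplier $|L| \geq 1$. Your two-sided pinch for item (4) --- $w(L) \leq |L|\lambda_{max} = 0$ from the lemma, and $w(L) \geq w(C_{max}) = |C_{max}|\lambda_{max} = 0$ from weight maximality --- is also precisely the argument the paper deploys later, in the proof of Lemma~\ref{lem:mean_domination} and in Case~3 of Theorem~\ref{thm:max-weight}, so you have reconstructed the author's intended route.

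The caveat you flag on item (4) is moreover a genuine catch rather than a matter of taste: in context $L$ denotes ``a max-weight or a max-length cycle,'' and under the max-length reading item (4) is not merely harder to prove but false. Concretely, take a strongly connected graph on nodes $1,2,3$ with a zero-weight self-loop at node $1$ and a triangle $1 \to 2 \to 3 \to 1$ whose three edges each have weight $-1$. Then $\lambda_{max} = 0$ (witnessed by the self-loop), yet the max-length cycle is the triangle with $w(L_l) = -3 < 0$. The lower bound in item (4) genuinely requires that $L$ weakly dominate the zero-weight critical cycle in weight, which only the max-weight cycle guarantees; consistently, the paper's own later statement of this fact (Theorem~\ref{thm:max-weight}, Case~3) is restricted to $L_w$. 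So your proof is complete for items (1)--(3) in general and for item (4) in the max-weight case, and the reservation you raise identifies an actual overstatement in the corollary as written, which should be restricted to max-weight cycles (or weakened to $w(L) \leq 0$ for max-length cycles).
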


\section{Longest Cycle Bounds: Extensions}
\label{sec:extensions}

We can refine these bounds by analyzing the algebraic properties of
max-weight ($L_w$) and max-length ($L_l$) cycles separately.
\begin{lem}
\label{lem:mean_domination}
If $\lambda_{max} \geq 0$, the cycle mean of the max-weight cycle
dominates that of the max-length cycle:
\begin{equation}
  \lambda(L_{w}) \geq \lambda(L_{l}).
\end{equation}
\end{lem}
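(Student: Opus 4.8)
The plan is to reduce the claimed inequality $\lambda(L_{w}) \ge \lambda(L_{l})$ to a comparison of cross-products, namely to showing that $w(L_{w})\,|L_{l}| \ge w(L_{l})\,|L_{w}|$, which is equivalent to the stated claim after dividing by the positive quantity $|L_{w}|\,|L_{l}|$. The two defining extremal properties supply the raw material: because $L_{w}$ maximizes total weight over all cycles and $L_{l}$ is itself a cycle, $w(L_{w}) \ge w(L_{l})$; and because $L_{l}$ maximizes edge count while $L_{w}$ is a cycle, $|L_{l}| \ge |L_{w}| > 0$.

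Before the case analysis I would extract the single consequence of the hypothesis that the argument truly relies on. Since $\lambda_{max} \ge 0$, the guaranteed max-critical cycle $C_{max}$ from \S~\ref{sec:basic-results} satisfies $w(C_{max}) = |C_{max}|\,\lambda_{max} \ge 0$, and as $L_{w}$ is the max-weight cycle we obtain $w(L_{w}) \ge w(C_{max}) \ge 0$. This is the only place the nonnegativity assumption enters.

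I would then split on the sign of $w(L_{l})$. When $w(L_{l}) \ge 0$, I would use the algebraic identity
\begin{equation}
  w(L_{w})\,|L_{l}| - w(L_{l})\,|L_{w}| = |L_{l}|\bigl(w(L_{w}) - w(L_{l})\bigr) + w(L_{l})\bigl(|L_{l}| - |L_{w}|\bigr).
\end{equation}
The first summand is nonnegative because $|L_{l}| > 0$ and $w(L_{w}) \ge w(L_{l})$; the second is nonnegative because $w(L_{l}) \ge 0$ and $|L_{l}| \ge |L_{w}|$. Hence the cross-product difference is nonnegative and the claim follows. When $w(L_{l}) < 0$, the cross-product decomposition is avoided entirely: here $\lambda(L_{l}) = w(L_{l})/|L_{l}| < 0$, whereas $\lambda(L_{w}) = w(L_{w})/|L_{w}| \ge 0$ by the consequence extracted above, so $\lambda(L_{w}) \ge 0 > \lambda(L_{l})$ immediately.

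The main obstacle is precisely the negative-weight case. The natural mediant argument works cleanly only while $w(L_{l}) \ge 0$; once $w(L_{l})$ becomes negative, the term $w(L_{l})\bigl(|L_{l}| - |L_{w}|\bigr)$ flips sign and the decomposition no longer certifies nonnegativity. This is exactly the scenario in which the hypothesis $\lambda_{max} \ge 0$ becomes indispensable, since it forces $w(L_{w}) \ge 0$ and thereby $\lambda(L_{w}) \ge 0$, dominating any negative $\lambda(L_{l})$. I would therefore emphasize in the writeup that the lemma can genuinely fail without the hypothesis, motivating why the two regimes must be handled by different mechanisms.
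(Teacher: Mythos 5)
Your proof is correct and takes essentially the same route as the paper's: both split on the sign of $w(L_l)$, handle the nonnegative case by combining the two extremal facts $w(L_w) \geq w(L_l)$ and $|L_l| \geq |L_w|$ (your cross-product identity is exactly the paper's chained inequality $\frac{w(L_w)}{|L_w|} \geq \frac{w(L_l)}{|L_w|} \geq \frac{w(L_l)}{|L_l|}$ with denominators cleared), and dispose of the negative case via $\lambda(L_w) \geq 0 > \lambda(L_l)$. The only cosmetic difference is organizational: you extract $w(L_w) \geq w(C_{max}) = |C_{max}|\lambda_{max} \geq 0$ once at the start, whereas the paper reaches the same fact through a separate case split on $\lambda_{max} = 0$ versus $\lambda_{max} > 0$.
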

\begin{proof}
If $\lambda_{max} = 0$, all cycles have non-positive mean, so $w(L_w)
\leq |L_w| \lambda_{max} = 0$. But since a zero-mean cycle exists,
$w(L_w) \geq 0$, hence $w(L_w) = 0$ and $\lambda(L_w) = 0 \geq
\lambda(L_l)$.

If $\lambda_{max} > 0$, some cycle has positive weight, so $w(L_w) >
0$ and $\lambda(L_w) > 0$. If $w(L_l) \leq 0$, then $\lambda(L_l) \leq
0 < \lambda(L_w)$. If $w(L_l) > 0$, then since $w(L_w) \geq w(L_l)$
and $|L_w| \leq |L_l|$:
$$ \lambda(L_{w}) = \frac{w(L_{w})}{|L_{w}|} \geq
\frac{w(L_{l})}{|L_{w}|} \geq \frac{w(L_{l})}{|L_{l}|} =
\lambda(L_{l}). $$
\end{proof}

The condition $\lambda_{max} \geq 0$ is necessary. When $\lambda_{max}
< 0$, the relationship between $\lambda(L_w)$ and $\lambda(L_l)$ is
indeterminate; the mean of the max-length cycle may exceed that of the
max-weight cycle depending on the distribution of edge weights.

\subsection{Max-weight Cycles}
\label{sec:max-weight-case}

Let $L_w$ be a max-weight cycle and $\mathcal{C}_{max}$ be the set of
all max-critical cycles $C_{max}$ in $G$. By definition, $w(L_w) \geq
w(C)$ for all cycles $C$ in $G$.

\begin{lem}[Max-weight key inequality]
\label{lem:max-weight-key}
$\lambda_{max}(|L_w| - |C_{max}|) \geq 0$.
\end{lem}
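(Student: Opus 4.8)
The plan is to write the target inequality as the end of a single chain that sandwiches $w(L_w)$ between two multiples of $\lambda_{max}$, so that no explicit case split on the sign of $\lambda_{max}$ is needed. Three facts suffice: (i) $w(C_{max}) = |C_{max}|\,\lambda_{max}$, which is just the definition of a max-critical cycle; (ii) $w(L_w) \geq w(C_{max})$, since $L_w$ maximizes total weight over all cycles and $C_{max}$ is one such cycle; and (iii) $w(L_w) \leq |L_w|\,\lambda_{max}$, which is the upper bound of Lemma~\ref{lem:basic-bounds} applied to $L_w$ after clearing the positive denominator $|L_w|$.

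Chaining these in order gives
\[
  |C_{max}|\,\lambda_{max} = w(C_{max}) \leq w(L_w) \leq |L_w|\,\lambda_{max},
\]
so $|C_{max}|\,\lambda_{max} \leq |L_w|\,\lambda_{max}$, and collecting the two end terms yields exactly $\lambda_{max}(|L_w| - |C_{max}|) \geq 0$. Concretely, I would first record facts (i)--(iii), then combine them in this single display, and finally factor out $\lambda_{max}$.

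I do not expect a genuine obstacle; the only point to watch is the direction of inequality (iii). The key is that $\lambda(L_w) \leq \lambda_{max}$, multiplied through by the strictly positive length $|L_w|$, preserves the inequality regardless of the sign of $\lambda_{max}$, so the sandwich holds uniformly and the factored product stays non-negative. The factored form is really a compact way to encode a sign-dependent geometric fact: when $\lambda_{max} > 0$ the max-weight cycle is no shorter than the critical one ($|L_w| \geq |C_{max}|$), when $\lambda_{max} < 0$ it is no longer ($|L_w| \leq |C_{max}|$), and when $\lambda_{max} = 0$ the statement is vacuous. As a sanity check I would re-derive the claim through this three-way case analysis and confirm it agrees with the chain above. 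Finally, since the argument uses only that $C_{max}$ is max-critical, it holds for every member of $\mathcal{C}_{max}$, which is precisely what later permits optimizing the bound over the choice of critical cycle.
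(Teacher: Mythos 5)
Your proof is correct and follows essentially the same route as the paper: sandwiching $w(L_w)$ between $|C_{max}|\lambda_{max} = w(C_{max})$ (maximality of $L_w$) and $|L_w|\lambda_{max}$ (the cycle-mean upper bound, which the paper cites via Lemma~\ref{lem:longest-mean-bounds}, itself just Lemma~\ref{lem:basic-bounds} applied to $L_w$), then factoring. The closing sign-based case analysis and the remark about arbitrariness of $C_{max}$ are accurate but supplementary; the core argument matches the paper's.
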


\begin{proof}
Since $L_w$ maximizes weight, $w(L_w) \geq w(C_{max}) =
|C_{max}|\lambda_{max}$ for any $C_{max} \in \mathcal{C}_{max}$.  From
Lemma~\ref{lem:longest-mean-bounds}, we have $w(L_w) \leq
|L_w|\lambda_{max}$.  Combining these yields:
$$ |C_{max}|\lambda_{max} \leq w(L_w) \leq |L_w|\lambda_{max} \implies \lambda_{max}(|L_w| - |C_{max}|) \geq 0. $$
\end{proof}

\begin{thm}[Max-weight cycle bounds]
  \label{thm:max-weight}
  We derive the following three cases from
  Lemma~\ref{lem:max-weight-key}:
  \begin{enumerate}
  \item If $\lambda_{max} > 0$, then
    $|L_w| \geq \max\limits_{C \in \mathcal{C}_{max}} |C|$
    and
    $w(L_w) \geq (\max\limits_{C \in \mathcal{C}_{max}} |C|) \lambda_{min}$.
  \item If $\lambda_{max} < 0$, then
    $|L_w| \leq \min\limits_{C \in \mathcal{C}_{max}} |C|$
    and
    $w(L_w) \leq (\min\limits_{C \in \mathcal{C}_{max}} |C|)\lambda_{max}$.
    \item If $\lambda_{max} = 0$, then $w(L_w) = 0$.
\end{enumerate}
\end{thm}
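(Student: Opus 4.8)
The plan is to prove each of the three cases directly from the key inequality $\lambda_{max}(|L_w| - |C_{max}|) \geq 0$ established in Lemma~\ref{lem:max-weight-key}, combined with the weight bounds from Lemma~\ref{lem:longest-mean-bounds}. The structure is a straightforward case split on the sign of $\lambda_{max}$, so the main work is bookkeeping rather than any deep argument. Since the key inequality holds for \emph{every} $C_{max} \in \mathcal{C}_{max}$, I can select the particular critical cycle that yields the tightest conclusion in each case.

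For Case 1 ($\lambda_{max} > 0$): I would divide the key inequality by the positive quantity $\lambda_{max}$ to obtain $|L_w| \geq |C_{max}|$ for every $C_{max} \in \mathcal{C}_{max}$; taking the maximum over all such critical cycles gives $|L_w| \geq \max_{C \in \mathcal{C}_{max}} |C|$. For the weight bound, I combine this with the lower bound $w(L_w) \geq |L_w|\lambda_{min}$ from Lemma~\ref{lem:longest-mean-bounds}. Here I must be careful about the sign of $\lambda_{min}$: if $\lambda_{min} \geq 0$, then $|L_w|\lambda_{min} \geq (\max_{C}|C|)\lambda_{min}$ follows directly from $|L_w| \geq \max_C |C|$; if $\lambda_{min} < 0$, multiplying the length inequality by a negative number reverses it, so the chain $w(L_w) \geq |L_w|\lambda_{min} \geq (\max_C |C|)\lambda_{min}$ requires justification. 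This sign subtlety is the one place the argument is not purely mechanical, and I expect it to be the main obstacle: I would resolve it by noting that the claimed bound $w(L_w) \geq (\max_C |C|)\lambda_{min}$ can alternatively be read as a valid (possibly loose) lower bound that still holds, and verify it holds regardless of the sign of $\lambda_{min}$ or restrict attention to the regime where it is meaningful.

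For Case 2 ($\lambda_{max} < 0$): dividing the key inequality by the negative quantity $\lambda_{max}$ reverses the inequality, giving $|L_w| \leq |C_{max}|$ for every critical cycle; taking the minimum yields $|L_w| \leq \min_{C \in \mathcal{C}_{max}} |C|$. For the weight, I apply the upper bound $w(L_w) \leq |L_w|\lambda_{max}$ from Lemma~\ref{lem:longest-mean-bounds}; since $\lambda_{max} < 0$ and $|L_w| \geq \min_C |C|$ would point the wrong way, I instead use $w(L_w) \leq |L_w|\lambda_{max}$ together with $|L_w| \leq \min_C|C|$, and because multiplying $|L_w| \leq \min_C|C|$ by the negative $\lambda_{max}$ flips it to $|L_w|\lambda_{max} \geq (\min_C|C|)\lambda_{max}$, I must confirm the direction carefully to land on $w(L_w) \leq (\min_C|C|)\lambda_{max}$. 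This again hinges on correct sign tracking.

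For Case 3 ($\lambda_{max} = 0$): this is immediate, since it is exactly the conclusion of Corollary~\ref{cor:sign-conditions}(4) applied to $L_w$ (or equivalently follows from the $\lambda_{max}=0$ analysis already carried out in the proof of Lemma~\ref{lem:mean_domination}), giving $w(L_w) = 0$. Overall I anticipate no conceptual difficulty; the proof is a disciplined case analysis whose only trap is the reversal of inequalities when multiplying by negative cycle means, so I would write each sign-sensitive step explicitly to avoid error.
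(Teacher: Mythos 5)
Your plan is structurally identical to the paper's proof: the same case split on the sign of $\lambda_{max}$, the same derivation of the length bounds by dividing the key inequality of Lemma~\ref{lem:max-weight-key}, and the same attempt to chain those length bounds with Lemma~\ref{lem:longest-mean-bounds} to get the weight bounds; your length bounds in Cases 1 and 2 and your Case 3 are correct. Where you differ is that you explicitly flag the two sign-sensitive steps that the paper's proof silently glosses over --- but flagging a gap is not closing it, and neither of your proposed resolutions is actually carried out. For Case 1 the gap is real but fixable: when $\lambda_{min} < 0$ (which is perfectly compatible with $\lambda_{max} > 0$), the chain $w(L_w) \geq |L_w|\lambda_{min} \geq (\max_{C \in \mathcal{C}_{max}}|C|)\,\lambda_{min}$ breaks exactly as you fear, and your suggested remedy (``verify it holds regardless of the sign \dots or restrict attention to the regime where it is meaningful'') merely restates the goal. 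The correct repair bypasses $|L_w|$ entirely: let $C^*$ be a longest max-critical cycle with $k = |C^*| = \max_{C \in \mathcal{C}_{max}}|C|$; then maximality of $L_w$ gives $w(L_w) \geq w(C^*) = k\lambda_{max} \geq k\lambda_{min}$, since $k \geq 1$ and $\lambda_{max} \geq \lambda_{min}$, valid for any sign of $\lambda_{min}$.

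For Case 2 the difficulty you flag is fatal, not a bookkeeping matter: no amount of ``careful sign tracking'' can land on $w(L_w) \leq (\min_{C \in \mathcal{C}_{max}}|C|)\,\lambda_{max}$, because that inequality is false in general. Since $L_w$ maximizes weight, $w(L_w) \geq w(C^*) = (\min_{C \in \mathcal{C}_{max}}|C|)\,\lambda_{max}$ for a shortest max-critical cycle $C^*$, so the claimed upper bound would force equality, which need not hold. Concretely, take three nodes $u,v,x$ with a self-loop at $u$ of weight $-4$, edges $u \to v$ and $v \to u$ of weight $-4$ each, and edges $v \to x$ and $x \to v$ of weight $-3$ each. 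The simple cycles have weights $-4$, $-8$, $-6$ and means $-4$, $-4$, $-3$, so $\lambda_{max} = -3 < 0$ and the unique max-critical cycle has length $2$; the claim then reads $w(L_w) \leq -6$, yet the max-weight cycle is the self-loop with $w(L_w) = -4$. The most your ingredients yield is $w(L_w) \leq |L_w|\lambda_{max} \leq \lambda_{max}$ (using $|L_w| \geq 1$ and $\lambda_{max} < 0$), or the reverse inequality $w(L_w) \geq (\min_{C \in \mathcal{C}_{max}}|C|)\,\lambda_{max}$. So your Case 2 plan cannot be completed as stated; the paper's own proof commits precisely the sign reversal you were worried about, and the defect there lies in the theorem's Case 2 weight bound itself, not merely in your write-up.
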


\begin{proof}
\mbox{}\par
\begin{itemize}
    \item \textbf{Case 1 ($\lambda_{max} > 0$):} Dividing by
      $\lambda_{max}$ preserves the inequality: $|L_w| \geq
      |C_{max}|$. Since this holds for \textit{any} $C_{max}$, it must
      hold for the maximum length in the set. Thus, $|L_w| \geq
      \max\limits_{C \in \mathcal{C}_{max}} |C|$. This implies $w(L_w) \geq
      (\max\limits_{C \in \mathcal{C}_{max}} |C|) \lambda_{min}$ since
      $w(L_w) \geq |L_w|\lambda_{min}$.
    \item \textbf{Case 2 ($\lambda_{max} < 0$):} Dividing by negative
      $\lambda_{max}$ reverses the inequality: $|L_w| \leq
      |C_{max}|$. Since this holds for \textit{any} $C_{max}$, it must
      hold for the minimum length in the set. Thus, $|L_w| \leq
      \min\limits_{C \in \mathcal{C}_{max}} |C|$. This implies $w(L_w) \leq
      (\min\limits_{C \in \mathcal{C}_{max}} |C|) \lambda_{max}$ since
      $w(L_w) \leq |L_w|\lambda_{max}$.
    \item \textbf{Case 3 ($\lambda_{max} = 0$):} Then $w(C_{max})=0$,
      so $w(L) \ge 0$. Since $\lambda_{max}=0$ precludes positive
      cycles, $w(L)=0$.
\end{itemize}
\end{proof}

\subsection{Max-length Case}
\label{sec:max-length-case}

Let $L_l$ be a max-length cycle and $\mathcal{C}_{min}$ be the set of
all min-critical cycles $C_{min}$ in $G$. By definition, $|L_l| \geq
|C|$ for all cycles $C$ in $G$.

\begin{lem}[Max-length key inequality]
\label{lem:max-length-key}
$w(L_l) \geq w(C_{min})|L_l| / |C_{min}| = |L_l|\lambda_{min}$.
\end{lem}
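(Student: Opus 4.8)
The plan is to recognize that this statement splits into a definitional equality and a genuine inequality, and that the inequality is an immediate specialization of results already established. First I would dispose of the right-hand equality: since $C_{min}$ is min-critical we have $\lambda(C_{min}) = \lambda_{min}$, and by Definition~\ref{def:means} this equals $w(C_{min})/|C_{min}|$; multiplying through by $|L_l|$ then gives $w(C_{min})|L_l|/|C_{min}| = |L_l|\lambda_{min}$ with no further work.

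The substance is therefore the single inequality $w(L_l) \geq |L_l|\lambda_{min}$. My approach is to observe that $L_l$ is, first and foremost, a cycle in $G$, so the basic cycle mean bounds of Lemma~\ref{lem:basic-bounds} apply verbatim. Concretely, the inequality chain \eqref{eq:anycycle2} instantiated at $C = L_l$ yields $|L_l|\lambda_{min} \leq w(L_l)$, which is exactly the desired lower bound. Equivalently, one can invoke the lower half of \eqref{eq:longest2} in Lemma~\ref{lem:longest-mean-bounds}, since $L_l$ is precisely a max-length cycle of the type covered there. Chaining this inequality with the definitional equality above reproduces the full statement.

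The step I would flag as the conceptual (rather than technical) crux is recognizing what the inequality does \emph{not} require: the max-length property of $L_l$ plays no role in establishing $w(L_l) \geq |L_l|\lambda_{min}$, since this bound holds for every cycle regardless of its length. The max-length hypothesis is what will make the lemma useful downstream — it lets one combine $|L_l| \geq |C_{min}|$ with this weight bound to compare $L_l$ against the critical cycle — but it is not needed for the present inequality. I expect no real obstacle in the derivation itself; the only care required is to keep the roles straight and resist an unnecessary appeal to the extremal property of $L_l$ when a statement about an arbitrary cycle already suffices.
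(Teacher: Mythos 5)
Your proof is correct and takes essentially the same route as the paper's: the paper likewise derives the result by multiplying $\lambda(L_l) \geq \lambda_{min} = w(C_{min})/|C_{min}|$ through by $|L_l|$, which is exactly your combination of the definitional equality with Lemma~\ref{lem:basic-bounds} (equivalently Lemma~\ref{lem:longest-mean-bounds}) applied to $L_l$. Your side remark that the max-length property of $L_l$ is not needed for this inequality is accurate and consistent with the paper's argument, which only uses that $L_l$ is a cycle.
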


\begin{proof}
From $\lambda(L_l) \geq \lambda_{min} = w(C_{min})/|C_{min}|$, multiply
both sides by $|L_l|$.
\end{proof}

\begin{thm}[Max-length cycle bounds]
  \label{thm:max-length}
  We derive the following three cases from
  Lemma~\ref{lem:max-length-key}:
  \begin{enumerate}
    \item If $\lambda_{min} > 0$, then $w(L_l) \geq \max\limits_{C \in
      \mathcal{C}_{min}} w(C)$.
    \item If $\lambda_{min} < 0$, then no direct comparison between
      $w(L_l)$ and $w(C_{min})$ is possible without knowing $|L_l|$,
      so we fall back to the unconditional bounds in
      Lemma~\ref{lem:longest-mean-bounds}.
    \item If $\lambda_{min} = 0$, $w(L_l) \geq 0$ as shown in
      Corollary~\ref{cor:sign-conditions}.
\end{enumerate}
\end{thm}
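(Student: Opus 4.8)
The plan is to proceed by case analysis on the sign of $\lambda_{min}$, mirroring the structure already used for Theorem~\ref{thm:max-weight}. The common engine for all three cases is the max-length key inequality $w(L_l) \geq |L_l|\lambda_{min}$ from Lemma~\ref{lem:max-length-key}, paired with the defining property of a max-length cycle, namely $|L_l| \geq |C|$ for every cycle $C$, and in particular $|L_l| \geq |C_{min}|$ for each min-critical cycle $C_{min} \in \mathcal{C}_{min}$ (this set being nonempty since $G$ is strongly connected).

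For Case 1 ($\lambda_{min} > 0$), I would start from $|L_l| \geq |C_{min}|$ and multiply through by the positive quantity $\lambda_{min}$, which preserves the inequality and yields $|L_l|\lambda_{min} \geq |C_{min}|\lambda_{min} = w(C_{min})$. Chaining this with the key inequality $w(L_l) \geq |L_l|\lambda_{min}$ gives $w(L_l) \geq w(C_{min})$. Since the argument holds for an arbitrary $C_{min}$, I would close by taking the maximum over the set to obtain $w(L_l) \geq \max\limits_{C \in \mathcal{C}_{min}} w(C)$.

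For Case 2 ($\lambda_{min} < 0$), the same multiplication now reverses the inequality: from $|L_l| \geq |C_{min}|$ and $\lambda_{min} < 0$ I get $|L_l|\lambda_{min} \leq |C_{min}|\lambda_{min} = w(C_{min})$. Thus the lower bound $|L_l|\lambda_{min}$ supplied by the key inequality sits \emph{below} $w(C_{min})$, so it cannot certify $w(L_l) \geq w(C_{min})$; the true ordering genuinely depends on the unknown $|L_l|$. I would justify the ``no direct comparison'' claim precisely by this sign reversal and then defer to the unconditional bounds of Lemma~\ref{lem:longest-mean-bounds}. Case 3 ($\lambda_{min} = 0$) is immediate, being exactly part 3 of Corollary~\ref{cor:sign-conditions} applied to $L = L_l$.

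The algebra is routine; the only genuinely delicate step is the justification in Case 2. The obstacle there is conceptual rather than computational: I must argue that the derivation \emph{fails to produce a useful bound} rather than producing a false one, which means explicitly tracking that the sign reversal pushes the available lower bound to the wrong side of $w(C_{min})$, leaving the comparison undetermined.
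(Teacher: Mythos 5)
Your proof is correct and follows essentially the same route as the paper's: multiply $|L_l| \geq |C_{min}|$ by $\lambda_{min}$, chain with Lemma~\ref{lem:max-length-key}, and take the maximum over $\mathcal{C}_{min}$ in Case~1, with the sign reversal explaining the fallback in Case~2 and Corollary~\ref{cor:sign-conditions} handling Case~3. Your more explicit justification of the ``no direct comparison'' claim in Case~2 is a slightly fuller articulation of what the paper states as ``the lower bound weakens,'' but it is the same argument.
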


\begin{proof}
\mbox{}\par
\begin{itemize}
  \item \textbf{Case 1 ($\lambda_{min} > 0$):} Since $\lambda_{min} >
    0$, multiplying $|L_l| \geq |C_{min}|$ by $\lambda_{min}$ yields
    $|L_l|\lambda_{min} \geq |C_{min}|\lambda_{min} = w(C_{min})$. By
    Lemma~\ref{lem:max-length-key} and since $C_{min}$ is arbitrary,
    this implies $w(L_l) \geq \max\limits_{C \in \mathcal{C}_{min}} w(C)$.
  \item \textbf{Case 2 ($\lambda_{min} < 0$):} The inequality $|L_l|
    \geq |C_{min}|$ implies $|L_l|\lambda_{min} \leq
    |C_{min}|\lambda_{min}$. The lower bound weakens, so we rely on
    the general upper bound $w(L_l) \leq |L_l|\lambda_{max}$.
  \item \textbf{Case 3 ($\lambda_{min} = 0$):} Proven above.
\end{itemize}
\end{proof}

\begin{cor}[Length bound for max-length cycles]
  \label{cor:max-length-length-bound}
  If $\lambda_{min} > 0$, then
  \begin{equation}
    |L_l| \geq \frac{\max\limits_{C \in \mathcal{C}_{min}} w(C)}{\lambda_{max}}.
  \end{equation}
\end{cor}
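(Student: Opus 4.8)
The plan is to sandwich $w(L_l)$ between a lower and an upper bound and then divide by $\lambda_{max}$. First I would invoke Case~1 of Theorem~\ref{thm:max-length}: under the standing hypothesis $\lambda_{min} > 0$, it delivers the lower bound $w(L_l) \geq \max_{C \in \mathcal{C}_{min}} w(C)$. Next I would apply the unconditional upper bound from Lemma~\ref{lem:longest-mean-bounds}, namely $w(L_l) \leq |L_l|\lambda_{max}$, which holds for any longest cycle and in particular for $L_l$. Chaining these two facts gives
\begin{equation}
  \max\limits_{C \in \mathcal{C}_{min}} w(C) \leq w(L_l) \leq |L_l|\lambda_{max}.
\end{equation}

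The only point that needs care is the sign of $\lambda_{max}$, since dividing by it must preserve the inequality. Here I would appeal to the basic chain in Lemma~\ref{lem:basic-bounds}: because $G$ is strongly connected and hence contains at least one cycle, we have $\lambda_{min} \leq \lambda_{max}$. Together with the hypothesis $\lambda_{min} > 0$, this forces $\lambda_{max} \geq \lambda_{min} > 0$, so $\lambda_{max}$ is strictly positive. Dividing the displayed inequality $\max_{C} w(C) \leq |L_l|\lambda_{max}$ through by the positive quantity $\lambda_{max}$ keeps its direction and yields exactly
\begin{equation}
  |L_l| \geq \frac{\max\limits_{C \in \mathcal{C}_{min}} w(C)}{\lambda_{max}},
\end{equation}
which is the claim.

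I do not expect any genuine obstacle, as the result is essentially a one-line composition of two already-established bounds; the single subtlety worth flagging explicitly is the verification that $\lambda_{max} > 0$, which legitimizes the division and which follows immediately from $\lambda_{min} \leq \lambda_{max}$ rather than from any fresh argument. The rest is routine inequality manipulation.
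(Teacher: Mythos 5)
Your proof is correct and follows exactly the paper's own argument: the lower bound from Case~1 of Theorem~\ref{thm:max-length}, the upper bound $w(L_l) \leq |L_l|\lambda_{max}$ from Lemma~\ref{lem:longest-mean-bounds}, and division by $\lambda_{max}$, whose positivity follows from $\lambda_{min} > 0$. Your only addition is making explicit, via Lemma~\ref{lem:basic-bounds}, why $\lambda_{min} \leq \lambda_{max}$ forces $\lambda_{max} > 0$, which the paper states without elaboration.
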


\begin{proof}
  By Theorem~\ref{thm:max-length}, $w(L_l) \geq w(C_{min})$ when
  $\lambda_{min} > 0$. By Lemma~\ref{lem:longest-mean-bounds}, $w(L_l)
  \leq |L_l|\lambda_{max}$. Since $\lambda_{min} > 0$ implies
  $\lambda_{max} > 0$, dividing by $\lambda_{max}$ yields the result.
\end{proof}

\begin{remark}[Weight shifting]
  \label{rem:weight-shift}
  Subtracting $\lambda_{min}$ from all edge weights yields a
  transformed graph with $\lambda'_{min} = 0$ and $\lambda'_{max} =
  \lambda_{max} - \lambda_{min}$. While this transformation alters the
  max-weight cycle, it preserves the max-length cycle. Applying Case~3
  of Theorem~\ref{thm:max-length} to the transformed graph gives
  $w'(L_l) \geq 0$, which converts back to $w(L_l) \geq
  |L_l|\lambda_{min}$, exactly the unconditional bound from
  Lemma~\ref{lem:longest-mean-bounds}. Similarly, applying
  Corollary~\ref{cor:max-length-length-bound} fails since $w'(C_{min})
  = 0$ yields only the trivial bound $|L_l| \geq 0$. This confirms
  that when $\lambda_{min} \leq 0$, the weight-shifting technique
  cannot improve upon the bounds from
  Lemma~\ref{lem:longest-mean-bounds}.
\end{remark}

\section{Shortest Cycle Bounds}
\label{sec:shortest-results}

Finding the minimum weight directed cycle is solvable in $O(nm)$ time
\citep{OrSe2017}. Notably, this state-of-the-art algorithm explicitly
computes $\lambda_{min}$ as a preprocessing step to reweight the graph
and restrict the search space, empirically confirming that solving the
minimum mean cycle problem is advantageous prior to exact cycle
optimization. Motivated by this connection, we provide the algebraic
dual bounds for shortest cycles below.

We define \textit{min-weight} ($S_w$) and \textit{min-length} ($S_l$)
cycles analogously to longest cycles. The following results parallel
Section~\ref{sec:extensions}. We skip the proofs as they are
analogous.

\begin{lem}[Cycle mean bounds for shortest cycles]
\label{lem:shortest-mean-bounds}
For any min-weight or min-length cycle $S$,
\begin{equation}
  |S| \lambda_{min} \leq w(S) \leq |S| \lambda_{max}.
\end{equation}
\end{lem}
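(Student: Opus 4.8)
The plan is to recognize that this lemma is an immediate specialization of the basic cycle mean bounds. Both a min-weight cycle $S_w$ and a min-length cycle $S_l$ are, first and foremost, ordinary cycles of $G$; the optimization property that distinguishes them (minimizing $w(C)$ or minimizing $|C|$) plays no role in establishing a two-sided bound on $w(S)$ in terms of the global extreme cycle means. I would therefore treat $S$ as an arbitrary but fixed cycle and simply invoke Lemma~\ref{lem:basic-bounds}.

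Concretely, the first and only substantive step is to apply the chain of inequalities in \eqref{eq:anycycle2} to the cycle $C = S$. That inequality reads $|C|\lambda_{min} \leq w(C) \leq |C|\lambda_{max}$ for every cycle $C$, and the substitution $C = S$ yields $|S|\lambda_{min} \leq w(S) \leq |S|\lambda_{max}$ directly. No case analysis on the sign of $\lambda_{min}$ or $\lambda_{max}$ is needed, since the inequality holds unconditionally; the same one-line reasoning already justified Lemma~\ref{lem:longest-mean-bounds} for longest cycles, and the present statement is its exact dual.

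There is essentially no obstacle to surmount here. The only point worth flagging is that the bound is genuinely agnostic to whether we minimize weight or length: unlike the refined results in Section~\ref{sec:extensions} (Theorems~\ref{thm:max-weight} and \ref{thm:max-length}), where the distinction between $L_w$ and $L_l$ drives separate key inequalities, a single application of Lemma~\ref{lem:basic-bounds} covers both $S_w$ and $S_l$ simultaneously. This is precisely why the authors can remark that the proof is analogous to the longest-cycle case and safely omit it.
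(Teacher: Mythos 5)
Your proposal is correct and matches the paper's (omitted) proof exactly: the paper skips this proof as analogous to Lemma~\ref{lem:longest-mean-bounds}, which is itself just a direct application of Lemma~\ref{lem:basic-bounds} to the specific cycle in question. Your observation that the minimizing property of $S$ is irrelevant to the bound is precisely why the one-line substitution $C = S$ in \eqref{eq:anycycle2} suffices.
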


\subsection{Min-weight Case}
\label{sec:min-weight-case}

Let $S_w$ be a min-weight cycle and $\mathcal{C}_{min}$ be the set of
all min-critical cycles $C_{min}$ in $G$. By definition, $w(S_w) \leq
w(C)$ for all cycles $C$ in $G$.

\begin{lem}[Min-weight key inequality]
  \label{lem:min-weight-key}
  $w(C_{min})(|C_{min}| - |S_w|) \geq 0$.
\end{lem}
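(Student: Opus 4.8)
The plan is to dualize the proof of the max-weight key inequality (Lemma~\ref{lem:max-weight-key}), swapping the roles of maximization and minimization and of $\lambda_{max}$ and $\lambda_{min}$. There, $w(L_w)$ was sandwiched between $|C_{max}|\lambda_{max}$ (from weight-maximality) and $|L_w|\lambda_{max}$ (from the cycle-mean upper bound). Here I would instead sandwich $w(S_w)$ between $|S_w|\lambda_{min}$ from below and $|C_{min}|\lambda_{min}$ from above, and read off the sign condition.

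Concretely, I would proceed in three short steps. First, since $S_w$ is a min-weight cycle, $w(S_w) \leq w(C)$ for every cycle $C$, and in particular $w(S_w) \leq w(C_{min}) = |C_{min}|\lambda_{min}$ for any min-critical cycle $C_{min}$, using $\lambda(C_{min}) = \lambda_{min}$. Second, the shortest-cycle mean bound (Lemma~\ref{lem:shortest-mean-bounds}) supplies the lower estimate $w(S_w) \geq |S_w|\lambda_{min}$. Chaining these two gives $|S_w|\lambda_{min} \leq w(S_w) \leq |C_{min}|\lambda_{min}$, hence $|S_w|\lambda_{min} \leq |C_{min}|\lambda_{min}$, which rearranges to $\lambda_{min}(|C_{min}| - |S_w|) \geq 0$.

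The only real subtlety, and the step I would be most careful about, is converting this $\lambda_{min}$-form into the stated $w(C_{min})$-form. Since $w(C_{min}) = |C_{min}|\lambda_{min}$ with $|C_{min}| \geq 1 > 0$, the two quantities $w(C_{min})$ and $\lambda_{min}$ always share the same sign, so multiplying $\lambda_{min}(|C_{min}| - |S_w|) \geq 0$ through by the strictly positive factor $|C_{min}|$ preserves the inequality and yields exactly $w(C_{min})(|C_{min}| - |S_w|) \geq 0$. I would note that $C_{min}$ was arbitrary in $\mathcal{C}_{min}$, so the inequality holds simultaneously for every min-critical cycle, which is what later lets one optimize over $\mathcal{C}_{min}$ to extract the tightest shortest-cycle bounds.
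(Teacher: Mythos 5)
Your proof is correct and is precisely the dualization the paper intends when it says the shortest-cycle proofs are ``analogous'' to those of \S~\ref{sec:extensions}: you sandwich $w(S_w)$ between $|S_w|\lambda_{min}$ (from Lemma~\ref{lem:shortest-mean-bounds}) and $|C_{min}|\lambda_{min}$ (from weight-minimality), mirroring the sandwich of $w(L_w)$ in Lemma~\ref{lem:max-weight-key}. Your explicit conversion from the $\lambda_{min}$-form to the stated $w(C_{min})$-form, using $w(C_{min}) = |C_{min}|\lambda_{min}$ and $|C_{min}| \geq 1 > 0$, is a detail the paper leaves implicit, and you handle it correctly.
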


\begin{thm}[Min-weight cycle bounds]
  \label{thm:min-weight}
  We derive the following three cases from
  Lemma~\ref{lem:min-weight-key}:
  \begin{enumerate}
  \item If $\lambda_{min} > 0$, then
    $|S_w| \leq \min\limits_{C \in \mathcal{C}_{min}} |C|$
    and
    $w(S_w) \leq (\min\limits_{C \in \mathcal{C}_{min}} |C|) \lambda_{max}$.
  \item If $\lambda_{min} < 0$, then
    $|S_w| \geq \max\limits_{C \in \mathcal{C}_{min}} |C|$
    and
    $w(S_w) \leq (\min\limits_{C \in \mathcal{C}_{min}} |C|) \lambda_{max}$.
  \item If $\lambda_{min} = 0$, then $w(S_w) = 0$.
\end{enumerate}
\end{thm}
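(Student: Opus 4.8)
The plan is to mirror the structure of Theorem~\ref{thm:max-weight}, since this min-weight result is its algebraic dual. The paper states the proofs are analogous, so I would first establish the key inequality in Lemma~\ref{lem:min-weight-key} and then divide by $\lambda_{min}$, tracking sign carefully. The starting point is that $S_w$ minimizes weight, so $w(S_w) \le w(C_{min}) = |C_{min}|\lambda_{min}$ for any $C_{min} \in \mathcal{C}_{min}$. Combined with the lower bound $w(S_w) \ge |S_w|\lambda_{min}$ from Lemma~\ref{lem:shortest-mean-bounds}, this gives $|S_w|\lambda_{min} \le w(S_w) \le |C_{min}|\lambda_{min}$, hence $\lambda_{min}(|S_w| - |C_{min}|) \le 0$, i.e. $\lambda_{min}(|C_{min}| - |S_w|) \ge 0$. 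This is essentially the content of Lemma~\ref{lem:min-weight-key} (noting $w(C_{min}) = |C_{min}|\lambda_{min}$ shares the sign of $\lambda_{min}$), so I would invoke that lemma as the established fact.

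For \textbf{Case 1} ($\lambda_{min} > 0$): dividing $\lambda_{min}(|C_{min}| - |S_w|) \ge 0$ by the positive $\lambda_{min}$ preserves the inequality, giving $|S_w| \le |C_{min}|$ for every $C_{min}$. Since this holds for arbitrary $C_{min}$, I take the minimum over $\mathcal{C}_{min}$ to get the tightest statement $|S_w| \le \min_{C \in \mathcal{C}_{min}} |C|$. The weight bound then follows from $w(S_w) \le |S_w|\lambda_{max}$ (the upper half of Lemma~\ref{lem:shortest-mean-bounds}) combined with $|S_w| \le \min_{C} |C|$, using that $\lambda_{min} > 0 \Rightarrow \lambda_{max} > 0$ so the length bound transfers to a weight bound with the correct direction.

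For \textbf{Case 2} ($\lambda_{min} < 0$): dividing by the negative $\lambda_{min}$ reverses the inequality, yielding $|S_w| \ge |C_{min}|$ for every $C_{min}$, so $|S_w| \ge \max_{C \in \mathcal{C}_{min}} |C|$. The weight upper bound $w(S_w) \le (\min_{C} |C|)\lambda_{max}$ I would derive not from this length bound directly but from the unconditional $w(S_w) \le |S_w|\lambda_{max}$ — and here I need to be careful, since $\lambda_{max}$ could be of either sign. For \textbf{Case 3} ($\lambda_{min} = 0$): then $w(C_{min}) = 0$, so $w(S_w) \le 0$; but $\lambda_{min} = 0$ means no negative cycles exist, forcing $w(S_w) \ge 0$, hence $w(S_w) = 0$.

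The main obstacle will be the weight bound in Case~2. The claimed bound $w(S_w) \le (\min_{C}|C|)\lambda_{max}$ does not obviously follow from $w(S_w) \le |S_w|\lambda_{max}$ when $\lambda_{max} \ge 0$, because I have just shown $|S_w| \ge \max_C |C| \ge \min_C |C|$, so replacing $|S_w|$ by the smaller $\min_C |C|$ in $|S_w|\lambda_{max}$ would \emph{decrease} the right-hand side and thus potentially break the inequality. I suspect the intended route is different: since $S_w$ is the global minimum-weight cycle, $w(S_w) \le w(C_{min}) = 0$ (as $\lambda_{min}<0$ gives $w(C_{min}) = |C_{min}|\lambda_{min} < 0$), and separately any cycle $C$ satisfies $w(C) \le |C|\lambda_{max}$; I would check whether the stated bound is meant as $w(S_w) \le w(C_{min}) \le 0 \le (\min_C |C|)\lambda_{max}$ when $\lambda_{max} \ge 0$, or whether it requires a sharper argument. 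Reconciling the asserted weight bound with the length bound — ensuring the direction of every inequality survives the sign of $\lambda_{max}$ — is the step I expect to demand the most care, and I would verify it against the dual Case~2 of Theorem~\ref{thm:max-weight} to confirm consistency before committing to the final form.
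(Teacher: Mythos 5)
Your derivation of the key inequality, your Cases 1 and 3, and the length bound in Case 2 are all correct and match what the paper intends (the paper skips this proof, declaring it analogous to Theorem~\ref{thm:max-weight}). However, the weight bound in Case 2 is left genuinely unresolved in your proposal: you establish it only for $\lambda_{max} \geq 0$ (via $w(S_w) \leq w(C_{min}) < 0 \leq (\min_{C}|C|)\lambda_{max}$) and explicitly defer the subcase $\lambda_{max} < 0$, which is not vacuous, since all cycle means may be negative. As written, the proof is therefore incomplete at exactly the step you flagged as the hard one.

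The missing step is a one-line chain whose two halves you already wrote down separately: let $C^*$ be a min-critical cycle attaining $\min_{C \in \mathcal{C}_{min}}|C|$. Then $w(S_w) \leq w(C^*)$ because $S_w$ has minimum weight over all cycles, and $w(C^*) = |C^*|\lambda_{min} \leq |C^*|\lambda_{max}$ by Lemma~\ref{lem:basic-bounds}, since $|C^*| \geq 1$ and $\lambda_{min} \leq \lambda_{max}$. Chaining gives $w(S_w) \leq (\min_{C \in \mathcal{C}_{min}}|C|)\lambda_{max}$ with no sign analysis at all; the same chain also dispatches the weight bound in Case 1. Note also that your instinct about the naive transfer failing is sound, but your fallback plan --- verifying against Case 2 of Theorem~\ref{thm:max-weight} ``to confirm consistency'' --- would mislead you rather than help: the paper's own proof of that case infers $w(L_w) \leq (\min_{C}|C|)\lambda_{max}$ from $w(L_w) \leq |L_w|\lambda_{max}$ and $|L_w| \leq \min_{C}|C|$, which is precisely the direction error you identified (with $\lambda_{max} < 0$, shrinking the length makes $|L_w|\lambda_{max}$ larger, not smaller). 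Indeed that bound can fail: take a self-loop of weight $-1.5$ and a two-cycle of mean $-1$, joined into a strongly connected graph by heavily negative edges; then $\lambda_{max} = -1$, $\min_{C \in \mathcal{C}_{max}}|C| = 2$, yet $w(L_w) = -1.5 > -2$. So the direct chain above, not dualization of that argument, is the correct way to finish your proof --- and it is reassuring that the min-weight statement you are proving survives, since the direct chain never needs the sign of $\lambda_{max}$.
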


\subsection{Min-length Case}
\label{sec:min-length-case}

Let $S_l$ be a min-length cycle and $\mathcal{C}_{max}$ be the set of
all max-critical cycles $C_{max}$ in $G$. By definition, the following
holds.

\begin{lem}[Min-length key inequality]
  \label{lem:min-length-key}
  $w(S_l) \leq \min\limits_{C \in \mathcal{C}_{max}} w(C)$
\end{lem}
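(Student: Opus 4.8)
The plan is to mirror the derivation of the max-length key inequality (Lemma~\ref{lem:max-length-key}), but with the roles of $\lambda_{min}$ and $\lambda_{max}$, and of ``longest'' and ``shortest,'' exchanged. First I would invoke the cycle-mean bound for shortest cycles (Lemma~\ref{lem:shortest-mean-bounds}) applied to $S_l$ itself: from $\lambda(S_l) \leq \lambda_{max}$, multiplying by $|S_l|$ gives $w(S_l) \leq |S_l|\lambda_{max}$ directly. This is the ``unconditional'' half of the argument and needs no case analysis. Next I would bring in the defining property of $S_l$: since $S_l$ has minimum length, $|S_l| \leq |C|$ for every cycle $C$, and in particular $|S_l| \leq |C_{max}|$ for each max-critical $C_{max} \in \mathcal{C}_{max}$.

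Writing $w(C_{max}) = |C_{max}|\lambda_{max}$ (the defining identity of a max-critical cycle), the target $w(S_l) \leq w(C_{max})$ then reduces to chaining
$$ w(S_l) \leq |S_l|\lambda_{max} \leq |C_{max}|\lambda_{max} = w(C_{max}), $$
after which I would take the minimum over all $C_{max} \in \mathcal{C}_{max}$ to obtain the stated bound. The first inequality is the cycle-mean step above; the second is just the length inequality $|S_l| \leq |C_{max}|$ scaled by $\lambda_{max}$.

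The hard part will be precisely that second inequality: multiplying $|S_l| \leq |C_{max}|$ by $\lambda_{max}$ preserves its direction only when $\lambda_{max} \geq 0$, and reverses it when $\lambda_{max} < 0$. So I expect the clean conclusion to require the hypothesis $\lambda_{max} \geq 0$ (with the $\lambda_{max}=0$ subcase degenerating to $w(S_l) \leq 0 = w(C_{max})$), exactly paralleling the sign-dependent case splits in Theorems~\ref{thm:min-weight} and~\ref{thm:max-length}. When $\lambda_{max} < 0$ the chain breaks, and I would not expect the bound to survive verbatim: a short max-critical witness whose mean equals $\lambda_{max}$ can carry larger (less negative) weight than a long max-critical cycle, so that $w(S_l)$ exceeds $\min_{C\in\mathcal{C}_{max}} w(C)$. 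In that regime the honest fallback is the unconditional $w(S_l) \leq |S_l|\lambda_{max}$ from the first step, just as the max-length analysis retreats to the general bound when its own sign hypothesis fails.
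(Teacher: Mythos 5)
Your proof is correct, and it is in fact more careful than the paper's: the paper omits the proof of this lemma entirely, asserting that it is ``analogous'' to the max-length case (Lemma~\ref{lem:max-length-key}). Your chain $w(S_l) \leq |S_l|\lambda_{max} \leq |C_{max}|\lambda_{max} = w(C_{max})$ is precisely that intended analogue, but you correctly identify what the analogy glosses over: the true dual of Lemma~\ref{lem:max-length-key} only yields the unconditional bound $w(S_l) \leq |S_l|\lambda_{max}$, and the further step to $w(S_l) \leq w(C_{max})$ multiplies $|S_l| \leq |C_{max}|$ by $\lambda_{max}$, which preserves the inequality only when $\lambda_{max} \geq 0$.

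Your suspicion that the statement fails verbatim when $\lambda_{max} < 0$ is justified, and it is easy to make concrete: take a strongly connected graph whose only cycles are a self-loop of weight $-1$ and a $5$-cycle all of whose edges have weight $-1$. Every cycle has mean $-1$, so both cycles are max-critical and $\min_{C \in \mathcal{C}_{max}} w(C) = -5$; yet $S_l$ is the self-loop with $w(S_l) = -1 > -5$. So the lemma as stated is false without a sign hypothesis; it should either carry the assumption $\lambda_{max} \geq 0$ or weaken its conclusion to $w(S_l) \leq |S_l|\lambda_{max}$. (The same example breaks Case~2 of Theorem~\ref{thm:minlength-bounds}, which depends on this lemma: it would assert $|S_l| \geq 5$ while $|S_l| = 1$.) Your proposed fallback --- retreating to the unconditional cycle-mean bound when $\lambda_{max} < 0$ --- is exactly the right repair, mirroring how Case~2 of Theorem~\ref{thm:max-length} handles the failure of its own sign hypothesis.
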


\begin{thm}[Bounds for min-length cycles]
  \label{thm:minlength-bounds}
\mbox{}\par
\begin{enumerate}
\item If $\lambda_{min} > 0$, then
  $|S_l| \leq \min\limits_{C \in \mathcal{C}_{max}} (w(C) / \lambda_{min})$.
\item If $\lambda_{min} < 0$, then
  $|S_l| \geq \max\limits_{C \in \mathcal{C}_{max}} (w(C) / \lambda_{min})$.
\item If $\lambda_{min} = 0$, then
  $w(S_l) \geq 0$.
\end{enumerate}
\end{thm}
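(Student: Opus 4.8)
The plan is to mirror the proof of Theorem~\ref{thm:max-length} (the max-length dual), using the two ingredients already in hand: the unconditional lower mean bound $w(S_l) \geq |S_l|\lambda_{min}$ from Lemma~\ref{lem:shortest-mean-bounds}, and the key inequality of Lemma~\ref{lem:min-length-key}, namely $w(S_l) \leq w(C)$ for every $C \in \mathcal{C}_{max}$. Chaining these for an arbitrary max-critical cycle $C$ produces the single sandwich
\[
  |S_l|\,\lambda_{min} \;\leq\; w(S_l) \;\leq\; w(C),
\]
so in particular $|S_l|\lambda_{min} \leq w(C)$ for each $C \in \mathcal{C}_{max}$. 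Every case then follows by dividing this relation by $\lambda_{min}$ and tracking the sign.

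First I would dispatch Case~3: when $\lambda_{min} = 0$, the lower mean bound alone reads $w(S_l) \geq |S_l|\cdot 0 = 0$, which is exactly the claim (and agrees with Corollary~\ref{cor:sign-conditions}), with no division needed. In Case~1 ($\lambda_{min} > 0$) I divide $|S_l|\lambda_{min} \leq w(C)$ by the positive $\lambda_{min}$, preserving the direction, to get $|S_l| \leq w(C)/\lambda_{min}$; since this holds for every $C \in \mathcal{C}_{max}$, the tightest (smallest) valid upper bound is the minimum, giving $|S_l| \leq \min_{C \in \mathcal{C}_{max}} (w(C)/\lambda_{min})$. Case~2 ($\lambda_{min} < 0$) is identical except that dividing by the negative $\lambda_{min}$ reverses the inequality to $|S_l| \geq w(C)/\lambda_{min}$, so the tightest (largest) lower bound is the maximum, yielding $|S_l| \geq \max_{C \in \mathcal{C}_{max}} (w(C)/\lambda_{min})$.

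The arithmetic is routine, so the part to get right is the bookkeeping: flipping the inequality when $\lambda_{min} < 0$ and correspondingly switching from $\min$ to $\max$ over $\mathcal{C}_{max}$ so that the reported bound is the one valid for \emph{every} max-critical cycle. The genuine obstacle, however, is a consistency check I would insist on making explicit: the key inequality $w(S_l) \leq w(C)$ silently relies on $\lambda_{max} \geq 0$, since $|S_l| \leq |C|$ together with $w(S_l) \leq |S_l|\lambda_{max} \leq |C|\lambda_{max} = w(C)$ needs $\lambda_{max} \geq 0$ for the middle step (a figure-eight with two all-negative loops shows it can fail otherwise). This hypothesis is automatic in Cases~1 and~3, where $\lambda_{min} \geq 0$ forces $\lambda_{max} \geq \lambda_{min} \geq 0$; in Case~2 I would either verify $\lambda_{max} \geq 0$ or, when $\lambda_{max} < 0$, fall back to the unconditional bounds of Lemma~\ref{lem:shortest-mean-bounds}, exactly as the max-length analysis does in its own Case~2.
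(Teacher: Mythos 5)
Your proposal follows exactly the route the paper intends: the paper gives no proof at all for this theorem (Section~\ref{sec:shortest-results} declares the shortest-cycle proofs ``analogous'' to Section~\ref{sec:extensions} and skips them), and your chain $|S_l|\lambda_{min} \leq w(S_l) \leq w(C)$ followed by sign-aware division by $\lambda_{min}$ is precisely that analogue. Your handling of Cases~1 and~3 is correct, including the observation that Case~3 needs no division at all.

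Your final consistency check, however, is not pedantry --- it exposes a genuine error in the paper's statement that the skipped proof conceals. Lemma~\ref{lem:min-length-key} and Case~2 of Theorem~\ref{thm:minlength-bounds} are both false when $\lambda_{max} < 0$, exactly as your figure-eight intuition suggests. Concretely: take a self-loop of weight $-2$ at a node $v$ and a triangle through $v$ with each edge of weight $-1$. Then $S_l$ is the self-loop with $|S_l| = 1$ and $w(S_l) = -2$, while $\lambda_{min} = -2$, $\lambda_{max} = -1$, and the unique max-critical cycle is the triangle with $w(C_{max}) = -3$. The lemma's claim $w(S_l) \leq w(C_{max})$ reads $-2 \leq -3$, which is false; and Case~2's claim $|S_l| \geq w(C_{max})/\lambda_{min} = 3/2$ fails since $|S_l| = 1$. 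Note further that whenever the lemma \emph{does} hold (i.e., $\lambda_{max} \geq 0$), Case~2 is vacuous, because then $w(C)/\lambda_{min} \leq 0 < 1 \leq |S_l|$. So the honest statement of Case~2 is what the paper's own Theorem~\ref{thm:max-length} does in its Case~2: no useful comparison is available, and one falls back to the unconditional bounds of Lemma~\ref{lem:shortest-mean-bounds}. Your proposed fix --- restricting the lemma and Case~2 to $\lambda_{max} \geq 0$ or falling back otherwise --- is the correct repair, and it is a repair the paper itself needs, not merely your proof.
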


\section{Heuristic Estimates and Gap Analysis}
\label{sec:heuristic}

While optimum cycle means provide strict bounds, they may be
loose. For heuristic search and when admissibility is not required, an
approximation closer to the true value $\lambda(L)$ is desirable. We
introduce two estimators:
\begin{enumerate}
\item Arithmetic mean $\lambda_{avg} = (\lambda_{min}+\lambda_{max})/2$.
\item Geometric mean $\lambda_{geo} = \sqrt{\lambda_{min} \cdot
  \lambda_{max}}$, defined when $\lambda_{min} > 0$.
\end{enumerate}
Both estimators depend only on the optimum cycle means, which are
uniquely defined and efficiently computable.

We introduce both arithmetic and geometric means to handle different
weight distributions. For symmetric distributions (e.g., uniform), the
population mean lies at the center of the range, making
$\lambda_{avg}$ an appropriate estimator. For right-skewed
distributions (e.g., log-normal), the population mean lies below the
center, making $\lambda_{geo}$ more suitable since it is always less
than or equal to $\lambda_{avg}$.

\begin{prop}[Arithmetic-geometric mean inequality]
\label{prop:am-gm}
When $\lambda_{min} > 0$, $\lambda_{geo} \leq \lambda_{avg}$, with
equality if and only if $\lambda_{min} = \lambda_{max}$.
\end{prop}

\begin{proof}
This follows directly from the arithmetic-geometric mean inequality
applied to $\lambda_{min}$ and $\lambda_{max}$.
\end{proof}

\begin{prop}[Absolute approximation error for $\lambda_{avg}$]
  \label{prop:abs-approx-error}
The absolute error of the estimator $\lambda_{avg}$ is bounded as
$|\lambda(L) - \lambda_{avg}| \leq (\lambda_{max} - \lambda_{min})/2$.
\end{prop}

\begin{proof}
By Lemma~\ref{lem:longest-mean-bounds}, $\lambda_{min} \leq \lambda(L)
\leq \lambda_{max}$. Since $\lambda_{avg}$ is the midpoint, the
maximum deviation is half the interval width.
\end{proof}

We define the \textit{weight dispersion}, denoted by $\delta$, as the ratio of
the weight range to the sum of the extreme weights:
\begin{equation}
\delta = \frac{\lambda_{max} - \lambda_{min}}{|\lambda_{max} + \lambda_{min}|},
\end{equation}
assuming $\lambda_{max} + \lambda_{min} \neq 0$. When $\lambda_{max} =
-\lambda_{min}$, the denominator vanishes and $\delta$ is undefined.

\begin{prop}[Relative approximation error for $\lambda_{avg}$]
\label{prop:rel-approx-error}
When $\delta$ is defined, the relative error of the estimator
$\lambda_{avg}$ is bounded by $\delta$:
\begin{equation}
\frac{|\lambda(L) - \lambda_{avg}|}{|\lambda_{avg}|} \leq \delta.
\end{equation}
\end{prop}

\begin{proof}
From Proposition~\ref{prop:abs-approx-error}, the absolute error is
bounded by $(\lambda_{max} - \lambda_{min})/2$.  Substituting the
definition $\lambda_{avg} = (\lambda_{min} + \lambda_{max})/2$ into
the denominator yields:
$$ \frac{|\lambda(L) - \lambda_{avg}|}{|\lambda_{avg}|} \leq
\frac{(\lambda_{max} - \lambda_{min})/2}{|(\lambda_{min} +
  \lambda_{max})/2|} = \frac{\lambda_{max} -
  \lambda_{min}}{|\lambda_{max} + \lambda_{min}|} = \delta.
$$
\end{proof}

As Propositions~\ref{prop:abs-approx-error}
and~\ref{prop:rel-approx-error} establish, $\lambda_{avg}$ provides
rigorous, instance-dependent error bounds. Similar but looser bounds
hold for $\lambda_{geo}$:

\begin{prop}[Absolute approximation error for $\lambda_{geo}$]
  \label{prop:abs-approx-error-geo}
When $\lambda_{min} > 0$, the absolute error of the estimator
$\lambda_{geo}$ is bounded as $|\lambda(L) - \lambda_{geo}| \leq
\lambda_{max} - \lambda_{min}$.
\end{prop}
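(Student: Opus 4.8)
The plan is to exploit the fact that both quantities appearing in the error expression, namely $\lambda(L)$ and $\lambda_{geo}$, live inside the same interval $[\lambda_{min}, \lambda_{max}]$ whose width is precisely $\lambda_{max} - \lambda_{min}$. Once that containment is established for both, the bound follows immediately, since the distance between any two points of an interval cannot exceed its width.

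First I would invoke Lemma~\ref{lem:longest-mean-bounds} to place $\lambda(L)$ in the interval, giving $\lambda_{min} \leq \lambda(L) \leq \lambda_{max}$. This part is free, as the lemma applies to any max-weight or max-length cycle $L$ and the hypothesis $\lambda_{min} > 0$ is not even needed for it. The second step is to verify that $\lambda_{geo}$ also lies in the same interval, i.e.\ $\lambda_{min} \leq \lambda_{geo} \leq \lambda_{max}$. This is where the hypothesis $\lambda_{min} > 0$ does its work: it guarantees that $\lambda_{geo} = \sqrt{\lambda_{min}\,\lambda_{max}}$ is well defined (both factors positive), and for positive reals $a \le b$ one has $a = \sqrt{a\cdot a} \le \sqrt{a\cdot b} \le \sqrt{b\cdot b} = b$, so with $a=\lambda_{min}$ and $b=\lambda_{max}$ the geometric mean is sandwiched between them.

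Having both $\lambda(L)$ and $\lambda_{geo}$ in $[\lambda_{min},\lambda_{max}]$, I would conclude with the elementary observation that for any two points $x,y$ of an interval $[p,q]$ we have $|x-y| \le q-p$. Applying this with $x=\lambda(L)$, $y=\lambda_{geo}$, $p=\lambda_{min}$, $q=\lambda_{max}$ yields $|\lambda(L)-\lambda_{geo}| \le \lambda_{max}-\lambda_{min}$, as claimed.

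There is no real obstacle here; the only step requiring a line of justification is the containment $\lambda_{min} \le \lambda_{geo} \le \lambda_{max}$, and even that is the standard betweenness property of the geometric mean of two positive numbers. It is worth remarking that this is exactly why the $\lambda_{geo}$ bound is looser than the $\lambda_{avg}$ bound of Proposition~\ref{prop:abs-approx-error}: the arithmetic mean is the \emph{midpoint} of the interval, so its worst-case deviation is only half the width, whereas $\lambda_{geo}$ can sit anywhere inside the interval and only admits the full-width guarantee.
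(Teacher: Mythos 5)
Your proof is correct and follows essentially the same route as the paper's: place both $\lambda(L)$ (via Lemma~\ref{lem:longest-mean-bounds}) and $\lambda_{geo}$ (via the betweenness of the geometric mean) in the interval $[\lambda_{min},\lambda_{max}]$, then bound their distance by the interval width. Your explicit monotonicity argument $a=\sqrt{a\cdot a}\le\sqrt{a\cdot b}\le\sqrt{b\cdot b}=b$ is in fact a slightly more careful justification of the containment than the paper's bare appeal to the AM--GM inequality.
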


\begin{proof}
By Lemma~\ref{lem:longest-mean-bounds}, $\lambda(L) \in
[\lambda_{min}, \lambda_{max}]$.  By the AM-GM inequality,
$\lambda_{geo} \in [\lambda_{min}, \lambda_{max}]$.  The maximum
distance between any two points in this interval is $\lambda_{max} -
\lambda_{min}$.
\end{proof}

\begin{prop}[Relative approximation error for $\lambda_{geo}$]
  \label{prop:rel-approx-error-geo}
When $\lambda_{min} > 0$, the relative error of the estimator
$\lambda_{geo}$ is bounded as
\begin{equation}
\frac{|\lambda(L) - \lambda_{geo}|}{|\lambda_{geo}|} \leq
\frac{\lambda_{max} - \lambda_{min}}{\sqrt{\lambda_{min} \cdot
    \lambda_{max}}}.
\end{equation}
\end{prop}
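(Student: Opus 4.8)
The plan is to mirror the structure of the proof of Proposition~\ref{prop:rel-approx-error} almost verbatim: take the absolute error bound already established for $\lambda_{geo}$ and divide through by $|\lambda_{geo}|$, then substitute the definition of the geometric mean into the denominator to recover the stated form.

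First I would invoke Proposition~\ref{prop:abs-approx-error-geo}, which under the hypothesis $\lambda_{min} > 0$ supplies the numerator bound $|\lambda(L) - \lambda_{geo}| \leq \lambda_{max} - \lambda_{min}$. Next I would verify that the same hypothesis makes the relative error well-posed: since $\lambda_{min} > 0$ forces $\lambda_{max} \geq \lambda_{min} > 0$, the product $\lambda_{min}\cdot\lambda_{max}$ is strictly positive, so $\lambda_{geo} = \sqrt{\lambda_{min}\lambda_{max}} > 0$ and hence $|\lambda_{geo}| = \lambda_{geo}$. This confirms that dividing by $|\lambda_{geo}|$ is legitimate and sign-preserving.

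The final step is the division itself. Dividing both sides of the absolute bound by $|\lambda_{geo}| = \sqrt{\lambda_{min}\lambda_{max}}$ gives
$$ \frac{|\lambda(L) - \lambda_{geo}|}{|\lambda_{geo}|} \leq \frac{\lambda_{max} - \lambda_{min}}{\sqrt{\lambda_{min}\cdot\lambda_{max}}}, $$
which is exactly the claimed inequality.

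I do not expect any genuine obstacle here; the argument is a routine rescaling of an inequality already in hand. The only point requiring attention is establishing $\lambda_{geo} > 0$ so that the relative error is defined and the absolute bound transfers cleanly. Notably, unlike the $\lambda_{avg}$ case there is no analogue of the degenerate $\lambda_{max} = -\lambda_{min}$ situation in which the denominator vanishes, precisely because $\lambda_{geo}$ is only introduced under the standing assumption $\lambda_{min} > 0$, which already rules out a zero denominator.
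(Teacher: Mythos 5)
Your proposal is correct and follows exactly the paper's own argument: divide the absolute error bound of Proposition~\ref{prop:abs-approx-error-geo} by $|\lambda_{geo}| = \sqrt{\lambda_{min}\cdot\lambda_{max}}$. Your additional verification that $\lambda_{min} > 0$ makes $\lambda_{geo}$ strictly positive (so the division is well-posed) is a nice touch the paper leaves implicit, but it is the same proof.
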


\begin{proof}
Follows directly from Proposition~\ref{prop:abs-approx-error-geo} by
dividing both sides by $|\lambda_{geo}| = \sqrt{\lambda_{min} \cdot
  \lambda_{max}}$.
\end{proof}

Thus, while $\lambda_{avg}$ achieves tighter bounds (half the interval width for
absolute error), both estimators guarantee bounded approximation error. Since a 
constant-factor approximation is impossible for the longest cycle problem on 
general graphs, these instance-dependent bounds allow practitioners to assess 
solution quality in strongly polynomial time.

Recall that the derived bounds are tight when $\lambda_{min} =
\lambda_{max}$ or when $\lambda_{max}=0$. To quantify the looseness of
the bounds in other cases, we define the \textit{bound gap ratio}
$\rho$:
\begin{equation}
  \rho = \frac{\lambda_{max} - \lambda_{min}}{|\lambda_{max}|}
\end{equation}
assuming $\lambda_{max} \neq 0$. A smaller $\rho$ indicates tighter
bounds. Note that when $\lambda_{min} > 0$, we have $\rho \leq
2\delta$ since $\lambda_{avg} \leq \lambda_{max}$.

\begin{figure}[ht]
  \centering
  \subfloat[Graph]{{\includegraphics[scale=0.15]{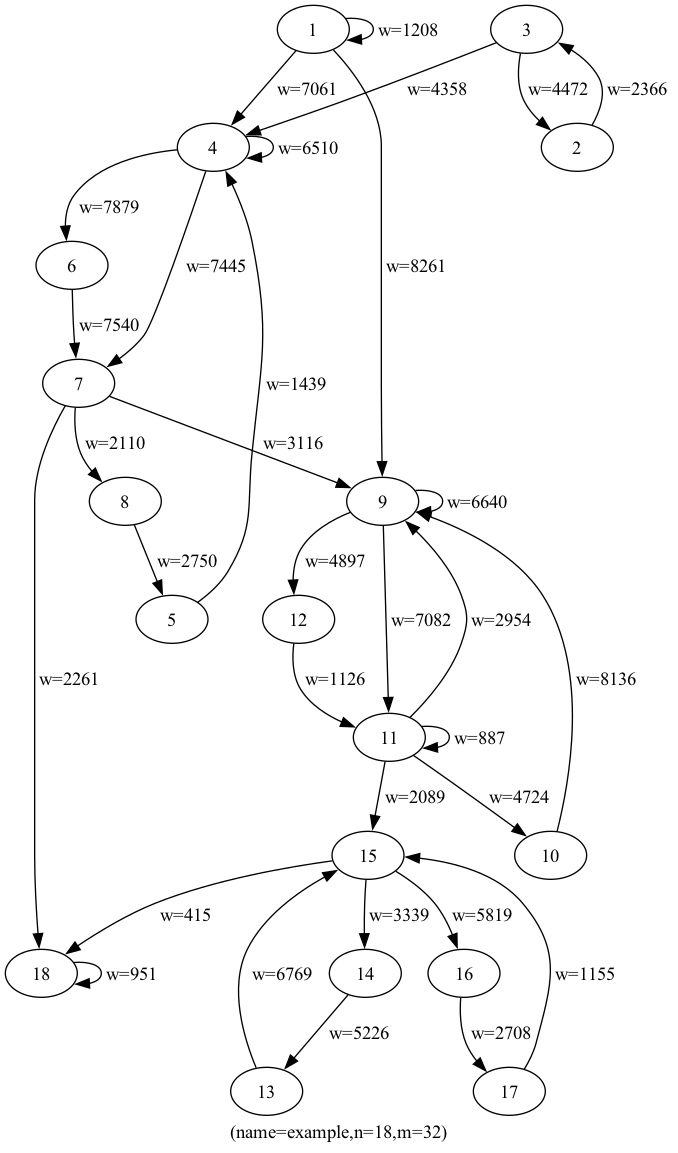}}}
  \qquad
  \subfloat[SCCs]{{\includegraphics[scale=0.12]{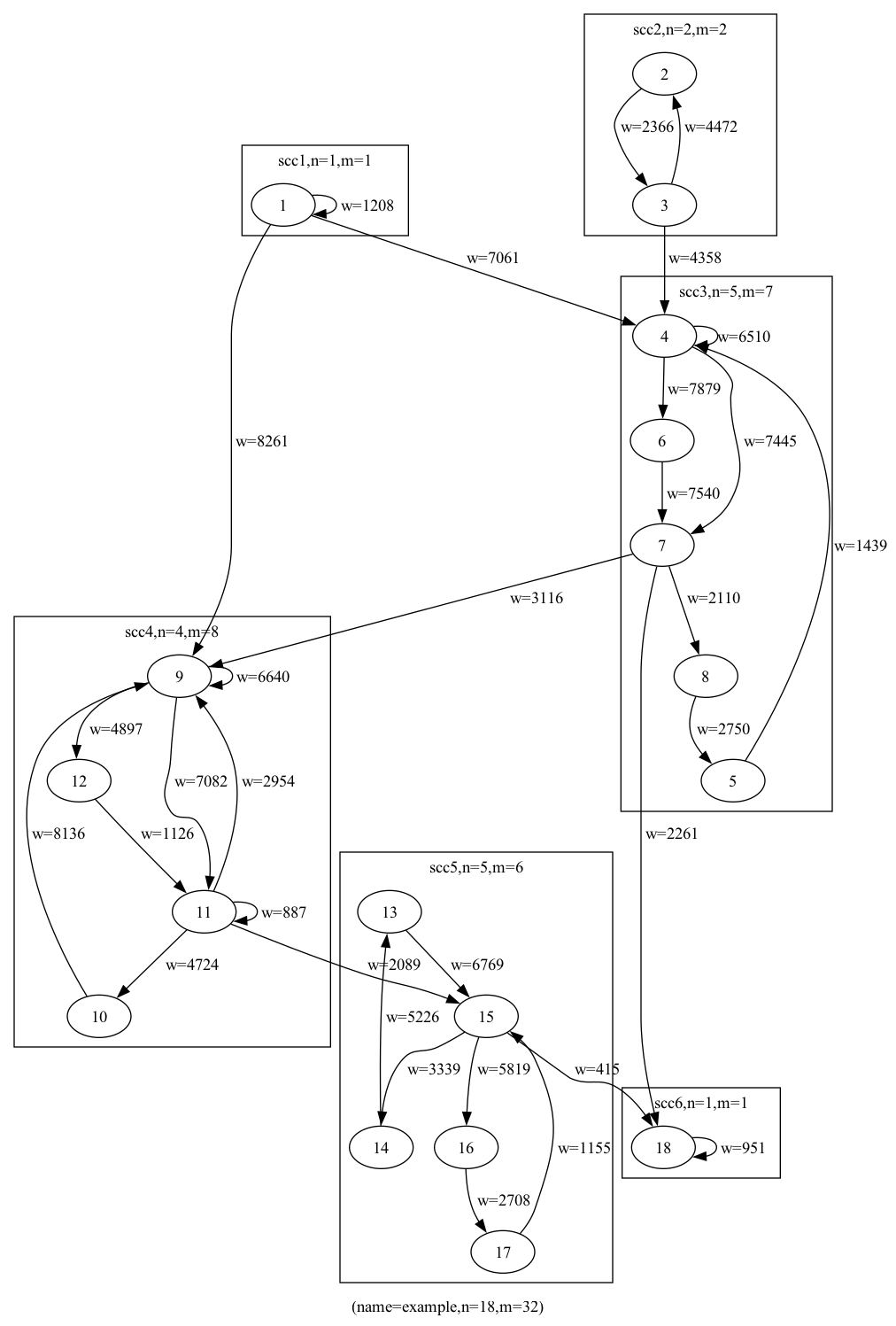}}}
  \caption{(a) Example graph with 18 nodes and 32 edges, and (b) its
    strongly connected components (SCCs).}
  \label{fig:example}
\end{figure}

\section{Illustration}
\label{sec:experiments}

We illustrate our bounds on the example graph in
Figure~\ref{fig:example}. Table~\ref{tab:example1} summarizes the
properties of the SCCs and Table~\ref{tab:example2} lists all cycles
in the example graph.

\begin{table}[ht]
  \centering
  \caption{Properties of each SCC in
    Figure~\ref{fig:example}. Relative errors $\epsilon_{avg}$ and
    $\epsilon_{geo}$ are calculated as
    percentages with respect to $\lambda_{max}$. Bold values indicate
    parameters matching the max-weight/max-length cycles.}
  \label{tab:example1}
  \resizebox{\textwidth}{!}{
      \begin{tabular}{lrrrrrrrrrrr}
        \toprule
        SCC & $n$ & $m$ & $w_{min}$ & $w_{max}$ & $w_{avg}$ & $\lambda_{min}$ & $\lambda_{max}$ & $\lambda_{avg}$ & $\lambda_{geo}$ & $\epsilon_{avg}$ & $\epsilon_{geo}$ \\
        \midrule
        scc1 & $1$ & $1$ & $\mathbf{1,208}$ & $\mathbf{1,208}$ & $1,208.00$ & $1,208.00$ & $1,208.00$ & $1,208.00$ & $1,208.00$ & $0.0\%$ & $0.0\%$ \\
        scc2 & $2$ & $2$ & $2,366$ & $4,472$ & $3,419.00$ & $3,419.00$ & $3,419.00$ & $3,419.00$ & $3,419.00$ & $0.0\%$ & $0.0\%$ \\
        scc3 & $5$ & $7$ & $1,439$ & $7,879$ & $5,096.14$ & $3,436.00$ & $6,510.00$ & $4,973.00$ & $4,729.52$ & $23.6\%$ & $27.3\%$ \\
        scc4 & $4$ & $8$ &   $887$ & $8,136$ & $4,555.75$ &   $887.00$ & $\mathbf{6,647.33}$ & $3,767.17$ & $2,428.21$ & $43.3\%$ & $63.5\%$ \\
        scc5 & $5$ & $6$ & $1,155$ & $6,769$ & $3,646.50$ & $3,227.33$ & $\mathbf{5,111.33}$ & $4,169.33$ & $4,061.52$ & $18.4\%$ & $20.5\%$ \\
        scc6 & $1$ & $1$ & $\mathbf{951}$ & $\mathbf{951}$ & $951.00$ & $951.00$ & $951.00$ & $951.00$ & $951.00$ & $0.0\%$ & $0.0\%$ \\
        \bottomrule
      \end{tabular}
  }
\end{table}

\begin{table}[ht]
  \centering
  \caption{Full cycle enumeration for each strongly connected
    component (SCC). Bold values denote maximum weight or length
    within the component.}
  \label{tab:example2}
  \footnotesize
  \begin{tabular}{llrrcr}
    \toprule
    \multicolumn{1}{c}{SCC} & \multicolumn{1}{c}{Id} & \multicolumn{1}{c}{Edges of cycle C} & \multicolumn{1}{c}{$w(C)$} & \multicolumn{1}{c}{$|C|$} & \multicolumn{1}{c}{$\lambda(C)$} \\
    \midrule
    scc1 & $C_{1}$ & $1 \rightarrow 1$ & $\mathbf{1,208}$ & $\mathbf{1}$ & $1,208.00$ \\
    \midrule 
    scc2 & $C_{2}$ & $2 \rightarrow 3 \rightarrow 2$ & $\mathbf{6,838}$ & $\mathbf{2}$ & $3,419.00$ \\
    \midrule 
         & $C_{3}$ & $4 \rightarrow 4$ & $6,510$ & $1$ & $6,510.00$ \\
    scc3 & $C_{4}$ & $4 \rightarrow 6 \rightarrow 7 \rightarrow 8 \rightarrow 5 \rightarrow 4$ & $\mathbf{21,718}$ & $\mathbf{5}$ & $4,343.60$ \\
         & $C_{5}$ & $4 \rightarrow 7 \rightarrow 8 \rightarrow 5 \rightarrow 4$ & $13,744$ & $4$ & $3,436.00$ \\
    \midrule 
         & $C_{6}$ & $9 \rightarrow 11 \rightarrow 10 \rightarrow 9$ & $\mathbf{19,942}$ & $3$ & $6,647.33$ \\
         & $C_{7}$ & $9 \rightarrow 9$ & $6,640$ & $1$ & $6,640.00$ \\
    scc4 & $C_{8}$ & $9 \rightarrow 11 \rightarrow 9$ & $10,036$ & $2$ & $5,018.00$ \\
         & $C_{9}$ & $9 \rightarrow 12 \rightarrow 11 \rightarrow 10 \rightarrow 9$ & $18,883$ & $\mathbf{4}$ & $4,720.75$ \\
         & $C_{10}$ & $9 \rightarrow 12 \rightarrow 11 \rightarrow 9$ & $8,977$ & $3$ & $2,992.33$ \\
         & $C_{11}$ & $11 \rightarrow 11$ & $887$ & $1$ & $887.00$ \\
    \midrule 
    scc5 & $C_{12}$ & $13 \rightarrow 15 \rightarrow 14 \rightarrow 13$ & $\mathbf{15,334}$ & $\mathbf{3}$ & $5,111.33$ \\
         & $C_{13}$ & $15 \rightarrow 16 \rightarrow 17 \rightarrow 15$ & $9,682$ & $\mathbf{3}$ & $3,227.33$ \\
    \midrule 
    scc6 & $C_{14}$ & $18 \rightarrow 18$ & $\mathbf{951}$ & $\mathbf{1}$ & $951.00$ \\
    \bottomrule
  \end{tabular}
\end{table}

In trivial SCCs (scc1, scc2, scc6) with a single cycle, the cycle is
both longest and shortest. In scc3, the max-weight and max-length
cycle is $C_4$, while $C_3$ is max-critical and $C_5$ is min-critical.

Applying our bounds to scc3 (where $\lambda_{max} = 6510 > 0$):
\begin{itemize}
\item Mean bound: $3436 \leq \lambda(L_w) \leq 6510$. (True: 4343.60.)
\item Weight bound (Theorem~\ref{thm:max-weight}): $w(L_w) \geq
  |C_{max}| \lambda_{min} = 1 \times 3436 = 3436$. (True: 21718.)
\item Length bound (Theorem~\ref{thm:max-weight}): $|L_w| \geq
  |C_{max}| = 1$. (True: 5.)
\end{itemize}

In scc4, the max-weight cycle $C_6$ coincides with $C_{max}$, so the
bound is tight: $w(L_w) = w(C_{max}) = 19942$.

\begin{table}[ht]
\centering
\caption{Basic properties of the ISCAS benchmark graphs used in the
  evaluation.}
\label{tab:iscas_properties}
\footnotesize
\begin{tabular}{lrrr}
\toprule
\multicolumn{1}{c}{Graph} & \multicolumn{1}{c}{$n$} & \multicolumn{1}{c}{$m$} & \multicolumn{1}{c}{\#Cycles} \\
\midrule
mm30a & 2,059 & 3,912 & 540 \\
ecc & 1,618 & 2,843 & 493 \\
sbc & 1,147 & 1,791 & 19,573 \\
mm9b & 777 & 1,452 & 84,851 \\
mm9a & 631 & 1,182 & 171 \\
s838 & 665 & 941 & 56 \\
s641 & 477 & 612 & 853 \\
s526 & 318 & 576 & 107 \\
s526n & 292 & 560 & 103 \\
s444 & 315 & 503 & 296 \\
s400 & 287 & 462 & 148 \\
s382 & 273 & 438 & 106 \\
s349 & 278 & 395 & 702 \\
s344 & 274 & 388 & 690 \\
mm4a & 170 & 454 & 136 \\
s420 & 104 & 178 & 4 \\
s208 & 83 & 119 & 10 \\
s27 & 55 & 87 & 7 \\
\midrule
\textbf{Average} & 545.7 & 938.5 & 6,047.0 \\
\textbf{Median} & 303.5 & 531.5 & 159.5 \\
\textbf{StDev} & 549.3 & 1,011.0 & 20,185.9 \\
\bottomrule
\end{tabular}
\end{table}

\section{Experimental Results on Benchmark Graphs}
\label{sec:benchmark-results}

\subsection{Methodology}

To evaluate our bounds on realistic and larger graphs, we utilized the
ISCAS benchmark circuits \citep{BrFu1985,BrBrKo1989,HaYaHa1999}, a
standard test suite in electronic design automation. Due to the high
computational cost of the exact cycle enumeration required to
establish ground truth \citep{HaJa08}, we selected a subset of smaller
instances from the \texttt{iscas-small} directory of the benchmark
repository \citep{Da14}. These graphs, listed in
Table~\ref{tab:iscas_properties}, range from 55 to 2,059 nodes and
contain between 4 and 84,851 cycles.

Since the original ISCAS benchmarks are unweighted logic circuits, we
assigned integer weights to edges using two different distributions to
test the robustness of our bounds:
\begin{enumerate}
    \item \textbf{Uniform Distribution:} Edge weights were generated
      uniformly at random in the range $[1, 3000]$. This represents a
      symmetric distribution with no skew.
    \item \textbf{Log-Normal Distribution:} Edge weights were
      generated using a log-normal distribution such that the range
      $[1, 3000]$ corresponds to the $6\sigma$ interval (covering
      $\approx 99.7\%$ of values). This models realistic circuit delay
      distributions where delays are skewed and strictly positive.
\end{enumerate}

Ground truth for the longest cycles ($L_w, L_l$) was established via
explicit enumeration using the algorithm of Hawick and James
\citep{HaJa08}, implemented in C based on \citep{Sc08}. Optimum cycle
means were computed using Howard's algorithm with the publicly
available implementation from \citep{Da15}, which was also used in
\citep{Da04}.

Regarding computational efficiency, the optimum cycle means were
computed in under one second for all benchmarks using Howard's
algorithm. The cycle enumeration (used only for validation) was the
bottleneck, taking several minutes for graphs with tens of thousands
of cycles; it did not finish after hours for larger ISCAS benchmark
graphs. In practice, the bounds would be used precisely to avoid such
enumeration.

\subsection{Evaluation Metrics}

To quantify the performance of our theoretical framework, we define two
metrics. First, the \textit{bound relative error} ($\Delta$) measures
the looseness of our strict lower bounds relative to the true ground
truth values. For a true value $V$ and a derived lower bound $B$, we
define $\Delta = (V - B)/V$. A value of $\Delta = 0\%$ indicates a
tight bound, while values approaching $100\%$ indicate looseness.
Specifically, for max-weight cycles we evaluate:
\begin{equation}
  \Delta_{w(L_w)} = \frac{w(L_w) - |C_{max}|\lambda_{min}}{w(L_w)}, \quad
  \Delta_{|L_w|} = \frac{|L_w| - |C_{max}|}{|L_w|},
\end{equation}
and for max-length cycles:
\begin{equation}
  \Delta_{w(L_l)} = \frac{w(L_l) - w(C_{min})}{w(L_l)}, \quad
  \Delta_{|L_l|} = \frac{|L_l| - w(C_{min})/\lambda_{max}}{|L_l|}.
\end{equation}
Second, the \textit{heuristic approximation error} ($\epsilon$)
measures the accuracy of our cycle mean estimators ($\lambda_{avg},
\lambda_{geo}$) compared to the true mean of the longest cycle
$\lambda(L)$. We define this as the standard relative error:
\begin{equation}
  \epsilon = \frac{|\lambda(L) - \lambda_{est}|}{|\lambda(L)|},
\end{equation}
where $\lambda_{est} \in \{\lambda_{avg}, \lambda_{geo}\}$.

\subsection{Results with Uniform Weights}

Tables~\ref{tab:iscas_results} and~\ref{tab:heuristic_results}
summarize the results for the uniform distribution.

\begin{table}[ht]
\centering
\caption{Experimental results on ISCAS benchmark graphs with
  \textbf{uniform} weights. The table compares properties of the
  max-weight ($L_w$) and max-length ($L_l$) cycles against the
  min/max-critical cycles ($C_{min}, C_{max}$). $\Delta$ columns show
  the relative error of the derived lower bounds compared to the true
  values. $\rho$ is the bound gap ratio.}
\label{tab:iscas_results}
\resizebox{\textwidth}{!}{%
\begin{tabular}{lrrrrrrrrrrrrrrrrr}
\toprule
\multicolumn{1}{c}{Graph} & \multicolumn{1}{c}{$w(L_w)$} & \multicolumn{1}{c}{$|L_w|$} & \multicolumn{1}{c}{$\lambda(L_w)$} & \multicolumn{1}{c}{$w(L_l)$} & \multicolumn{1}{c}{$|L_l|$} & \multicolumn{1}{c}{$\lambda(L_l)$} & \multicolumn{1}{c}{$w(C_{min})$} & \multicolumn{1}{c}{$|C_{min}|$} & \multicolumn{1}{c}{$w(C_{max})$} & \multicolumn{1}{c}{$|C_{max}|$} & \multicolumn{1}{c}{$\lambda_{min}$} & \multicolumn{1}{c}{$\lambda_{max}$} & \multicolumn{1}{c}{$\Delta_{w(L_w)}$} & \multicolumn{1}{c}{$\Delta_{|L_w|}$} & \multicolumn{1}{c}{$\Delta_{w(L_l)}$} & \multicolumn{1}{c}{$\Delta_{|L_l|}$} & \multicolumn{1}{c}{$\rho$} \\
\midrule
mm30a & 134,012 & 77 & 1740.4 & 134,012 & 77 & 1740.4 & 7,213 & 10 & 21,057 & 10 & 721.3 & 2105.7 & 94.6\% & 87.0\% & 94.6\% & 95.6\% & 0.657 \\
ecc & 270,922 & 167 & 1622.3 & 264,044 & 167 & 1581.1 & 1,579 & 3 & 7,527 & 3 & 526.3 & 2509.0 & 99.4\% & 98.2\% & 99.4\% & 99.6\% & 0.790 \\
sbc & 110,459 & 71 & 1555.8 & 109,530 & 72 & 1521.2 & 7,825 & 10 & 12,529 & 6 & 782.5 & 2088.2 & 95.7\% & 91.5\% & 92.9\% & 94.8\% & 0.625 \\
mm9b & 130,409 & 81 & 1610.0 & 126,037 & 85 & 1482.8 & 2,899 & 5 & 10,643 & 5 & 579.8 & 2128.6 & 97.8\% & 93.8\% & 97.7\% & 98.4\% & 0.728 \\
mm9a & 50,642 & 31 & 1633.6 & 47,520 & 35 & 1357.7 & 4,273 & 5 & 10,109 & 5 & 854.6 & 2021.8 & 91.6\% & 83.9\% & 91.0\% & 94.0\% & 0.577 \\
s838 & 11,362 & 6 & 1893.7 & 10,279 & 6 & 1713.2 & 2,102 & 5 & 10,278 & 5 & 420.4 & 2055.6 & 81.5\% & 16.7\% & 79.6\% & 83.0\% & 0.795 \\
s641 & 263,663 & 174 & 1515.3 & 263,663 & 174 & 1515.3 & 10,966 & 10 & 23,006 & 12 & 1096.6 & 1917.2 & 95.0\% & 93.1\% & 95.8\% & 96.7\% & 0.428 \\
s526 & 37,844 & 22 & 1720.2 & 35,803 & 23 & 1556.7 & 1,831 & 3 & 10,932 & 5 & 610.3 & 2186.4 & 91.9\% & 77.3\% & 94.9\% & 96.4\% & 0.721 \\
s526n & 40,147 & 22 & 1824.9 & 39,721 & 23 & 1727.0 & 3,922 & 5 & 11,530 & 5 & 784.4 & 2306.0 & 90.2\% & 77.3\% & 90.1\% & 92.6\% & 0.660 \\
s444 & 67,685 & 45 & 1504.1 & 64,885 & 45 & 1441.9 & 7,447 & 10 & 11,391 & 5 & 744.7 & 2278.2 & 94.5\% & 88.9\% & 88.5\% & 92.7\% & 0.673 \\
s400 & 58,286 & 35 & 1665.3 & 54,202 & 36 & 1505.6 & 5,101 & 7 & 12,178 & 6 & 728.7 & 2029.7 & 92.5\% & 82.9\% & 90.6\% & 93.0\% & 0.641 \\
s382 & 53,361 & 33 & 1617.0 & 50,394 & 34 & 1482.2 & 6,450 & 7 & 15,950 & 7 & 921.4 & 2278.6 & 87.9\% & 78.8\% & 87.2\% & 91.7\% & 0.596 \\
s349 & 130,251 & 87 & 1497.1 & 128,554 & 89 & 1444.4 & 5,008 & 6 & 10,699 & 5 & 834.7 & 2139.8 & 96.8\% & 94.3\% & 96.1\% & 97.4\% & 0.610 \\
s344 & 139,510 & 89 & 1567.5 & 139,510 & 89 & 1567.5 & 7,991 & 9 & 16,644 & 8 & 887.9 & 2080.5 & 94.9\% & 91.0\% & 94.3\% & 95.7\% & 0.573 \\
mm4a & 17,524 & 10 & 1752.4 & 12,427 & 10 & 1242.7 & 6,793 & 8 & 15,399 & 8 & 849.1 & 1924.9 & 61.2\% & 20.0\% & 45.3\% & 64.7\% & 0.559 \\
s420 & 7,976 & 6 & 1329.3 & 7,263 & 6 & 1210.5 & 4,848 & 5 & 7,976 & 6 & 969.6 & 1329.3 & 27.1\% & 0.0\% & 33.3\% & 39.2\% & 0.271 \\
s208 & 9,990 & 5 & 1998.0 & 9,202 & 6 & 1533.7 & 7,318 & 6 & 9,990 & 5 & 1219.7 & 1998.0 & 39.0\% & 0.0\% & 20.5\% & 39.0\% & 0.390 \\
s27 & 15,539 & 10 & 1553.9 & 15,539 & 10 & 1553.9 & 14,236 & 10 & 8,443 & 5 & 1423.6 & 1688.6 & 54.2\% & 50.0\% & 8.4\% & 15.7\% & 0.157 \\
\midrule
\textbf{Average} & 86,087.9 & 53.9 & 1644.5 & 84,032.5 & 54.8 & 1509.9 & 5,989.0 & 6.9 & 12,571.2 & 6.2 & 830.9 & 2059.2 & 82.5\% & 68.0\% & 77.8\% & 82.2\% & 0.581 \\
\textbf{Median} & 55,823.5 & 34.0 & 1619.6 & 52,298.0 & 35.5 & 1518.3 & 5,775.5 & 6.5 & 11,161.5 & 5.0 & 809.5 & 2084.3 & 92.2\% & 83.4\% & 90.8\% & 93.5\% & 0.618 \\
\textbf{StDev} & 80,581.5 & 51.7 & 158.8 & 80,222.4 & 51.6 & 143.4 & 3,236.5 & 2.5 & 4,277.3 & 2.1 & 244.0 & 254.9 & 21.8\% & 34.3\% & 29.1\% & 25.1\% & 0.171 \\
\bottomrule
\end{tabular}%
}
\end{table}

\begin{table}[ht]
\centering
\caption{Experimental results on ISCAS benchmark graphs with
  \textbf{uniform} weights. The table compares heuristic
  approximations $\lambda_{avg}$ and $\lambda_{geo}$ against cycle
  means of the max-weight ($L_w$) and max-length ($L_l$)
  cycles. Relative errors $\epsilon$ are calculated as percentages.}
\label{tab:heuristic_results}
\resizebox{\textwidth}{!}{%
\begin{tabular}{@{}lrrrrrrrr@{}}
\toprule
\multicolumn{1}{c}{Graph} & \multicolumn{1}{c}{$\lambda_{avg}$} & \multicolumn{1}{c}{$\lambda_{geo}$} & \multicolumn{1}{c}{$\epsilon_{avg}(L_w)$} & \multicolumn{1}{c}{$\epsilon_{geo}(L_w)$} & \multicolumn{1}{c}{$\epsilon_{avg}(L_l)$} & \multicolumn{1}{c}{$\epsilon_{geo}(L_l)$} \\
\midrule
mm30a & 1413.5 & 1232.4 & 18.8\% & 29.2\% & 18.8\% & 29.2\% \\
ecc & 1517.7 & 1149.1 & 6.5\% & 29.2\% & 4.0\% & 27.3\% \\
sbc & 1435.3 & 1278.3 & 7.7\% & 17.8\% & 5.6\% & 16.0\% \\
mm9b & 1354.2 & 1110.9 & 15.9\% & 31.0\% & 8.7\% & 25.1\% \\
mm9a & 1438.2 & 1314.5 & 12.0\% & 19.5\% & 5.9\% & 3.2\% \\
s838 & 1238.0 & 929.6 & 34.6\% & 50.9\% & 27.7\% & 45.7\% \\
s641 & 1506.9 & 1450.0 & 0.6\% & 4.3\% & 0.6\% & 4.3\% \\
s526 & 1398.4 & 1155.1 & 18.7\% & 32.8\% & 10.2\% & 25.8\% \\
s526n & 1545.2 & 1344.9 & 15.3\% & 26.3\% & 10.5\% & 22.1\% \\
s444 & 1511.4 & 1302.5 & 0.5\% & 13.4\% & 4.8\% & 9.7\% \\
s400 & 1379.2 & 1216.2 & 17.2\% & 27.0\% & 8.4\% & 19.2\% \\
s382 & 1600.0 & 1449.0 & 1.1\% & 10.4\% & 7.9\% & 2.2\% \\
s349 & 1487.2 & 1336.4 & 0.7\% & 10.7\% & 3.0\% & 7.5\% \\
s344 & 1484.2 & 1359.1 & 5.3\% & 13.3\% & 5.3\% & 13.3\% \\
mm4a & 1387.0 & 1278.4 & 20.9\% & 27.0\% & 11.6\% & 2.9\% \\
s420 & 1149.5 & 1135.3 & 13.5\% & 14.6\% & 5.0\% & 6.2\% \\
s208 & 1608.8 & 1561.1 & 19.5\% & 21.9\% & 4.9\% & 1.8\% \\
s27 & 1556.1 & 1550.4 & 0.1\% & 0.2\% & 0.1\% & 0.2\% \\
\midrule
\textbf{Average} & 1445.0 & 1286.3 & 11.6\% & 21.1\% & 8.0\% & 14.5\% \\
\textbf{Median} & 1461.2 & 1307.5 & 12.7\% & 20.7\% & 5.8\% & 11.5\% \\
\textbf{StDev} & 118.7 & 152.9 & 9.5\% & 12.0\% & 6.6\% & 12.7\% \\
\bottomrule
\end{tabular}%
}
\end{table}

As shown in Tables~\ref{tab:iscas_results}
and~\ref{tab:heuristic_results}, the uniform distribution results
validate our theoretical framework. A key observation is that
$\lambda(L_w)$ and $\lambda(L_l)$ always fall strictly within
$[\lambda_{min}, \lambda_{max}]$, confirming
Lemma~\ref{lem:longest-mean-bounds}. The average bound gap ratio
$\rho$ is 0.581, indicating that on average, the interval spans
roughly 58\% of the maximum possible mean value.

The use of uniformly distributed random weights results in a symmetric
distribution of edge weights. First, this explains the strong
correlation between max-weight and max-length cycles; in a uniform
distribution, total cycle weight scales linearly with length, making
the longest cycle ($L_l$) a natural candidate for the heaviest cycle
($L_w$). Second, it elucidates the accuracy of the heuristic
$\lambda_{avg}$. For sufficiently long cycles, the law of large
numbers implies that the cycle mean will converge toward the expected
edge weight. Because the uniform distribution is symmetric, this
expectation lies at the midpoint of the range, which is
well-approximated by $\lambda_{avg}$. This is evidenced by the low
median heuristic errors of 12.7\% ($L_w$) and 5.8\% ($L_l$) for
$\lambda_{avg}$. In contrast, $\lambda_{geo}$ underestimates the cycle
means for uniform weights, yielding higher median errors of 20.7\%
($L_w$) and 11.5\% ($L_l$).

\subsection{Results with Log-Normal Weights}

Tables~\ref{tab:iscas_results_log} and~\ref{tab:heuristic_results_log}
present the results for the log-normal distribution.

\begin{table}[ht]
\centering
\caption{Experimental results on ISCAS benchmark graphs with
  \textbf{log-normal} weights. The table compares properties of the
  max-weight ($L_w$) and max-length ($L_l$) cycles against the
  min/max-critical cycles ($C_{min}, C_{max}$). $\Delta$ columns show
  the relative error of the derived lower bounds compared to the true
  values. $\rho$ is the bound gap ratio.}
\label{tab:iscas_results_log}
\resizebox{\textwidth}{!}{%
\begin{tabular}{lrrrrrrrrrrrrrrrrr}
\toprule
\multicolumn{1}{c}{Graph} & \multicolumn{1}{c}{$w(L_w)$} & \multicolumn{1}{c}{$|L_w|$} & \multicolumn{1}{c}{$\lambda(L_w)$} & \multicolumn{1}{c}{$w(L_l)$} & \multicolumn{1}{c}{$|L_l|$} & \multicolumn{1}{c}{$\lambda(L_l)$} & \multicolumn{1}{c}{$w(C_{min})$} & \multicolumn{1}{c}{$|C_{min}|$} & \multicolumn{1}{c}{$w(C_{max})$} & \multicolumn{1}{c}{$|C_{max}|$} & \multicolumn{1}{c}{$\lambda_{min}$} & \multicolumn{1}{c}{$\lambda_{max}$} & \multicolumn{1}{c}{$\Delta_{w(L_w)}$} & \multicolumn{1}{c}{$\Delta_{|L_w|}$} & \multicolumn{1}{c}{$\Delta_{w(L_l)}$} & \multicolumn{1}{c}{$\Delta_{|L_l|}$} & \multicolumn{1}{c}{$\rho$} \\
\midrule
mm30a & 13,515 & 62 & 218.0 & 7,180 & 77 & 93.2 & 336 & 10 & 8,554 & 26 & 33.6 & 329.0 & 93.5\% & 58.1\% & 95.3\% & 98.7\% & 0.898 \\
ecc & 24,045 & 166 & 144.8 & 24,001 & 167 & 143.7 & 39 & 3 & 3,295 & 5 & 13.0 & 659.0 & 99.7\% & 97.0\% & 99.8\% & 100.0\% & 0.980 \\
sbc & 12,572 & 52 & 241.8 & 8,476 & 72 & 117.7 & 323 & 9 & 7,121 & 10 & 35.9 & 712.1 & 97.1\% & 80.8\% & 96.2\% & 99.4\% & 0.950 \\
mm9b & 13,818 & 82 & 168.5 & 11,449 & 85 & 134.7 & 189 & 5 & 2,944 & 13 & 37.8 & 226.5 & 96.4\% & 84.1\% & 98.3\% & 99.0\% & 0.833 \\
mm9a & 5,690 & 24 & 237.1 & 4,014 & 35 & 114.7 & 175 & 5 & 2,052 & 5 & 35.0 & 410.4 & 96.9\% & 79.2\% & 95.6\% & 98.8\% & 0.915 \\
s838 & 3,573 & 5 & 714.6 & 476 & 6 & 79.3 & 71 & 4 & 3,573 & 5 & 17.8 & 714.6 & 97.5\% & 0.0\% & 85.1\% & 98.3\% & 0.975 \\
s641 & 28,826 & 162 & 177.9 & 23,747 & 174 & 136.5 & 736 & 12 & 4,364 & 10 & 61.3 & 436.4 & 97.9\% & 93.8\% & 96.9\% & 99.0\% & 0.859 \\
s526 & 4,671 & 18 & 259.5 & 3,098 & 23 & 134.7 & 119 & 5 & 2,024 & 5 & 23.8 & 404.8 & 97.5\% & 72.2\% & 96.2\% & 98.7\% & 0.941 \\
s526n & 3,834 & 21 & 182.6 & 2,551 & 23 & 110.9 & 100 & 5 & 1,590 & 3 & 20.0 & 530.0 & 98.4\% & 85.7\% & 96.1\% & 99.2\% & 0.962 \\
s444 & 7,907 & 45 & 175.7 & 4,455 & 45 & 99.0 & 260 & 6 & 3,085 & 8 & 43.3 & 385.6 & 95.6\% & 82.2\% & 94.2\% & 98.5\% & 0.888 \\
s400 & 7,264 & 35 & 207.5 & 3,993 & 36 & 110.9 & 266 & 6 & 3,088 & 6 & 44.3 & 514.7 & 96.3\% & 82.9\% & 93.3\% & 98.6\% & 0.914 \\
s382 & 5,978 & 33 & 181.2 & 4,874 & 34 & 143.3 & 288 & 10 & 3,511 & 7 & 28.8 & 501.6 & 96.6\% & 78.8\% & 94.1\% & 98.3\% & 0.943 \\
s349 & 8,995 & 89 & 101.1 & 8,995 & 89 & 101.1 & 259 & 5 & 1,782 & 6 & 51.8 & 297.0 & 96.5\% & 93.3\% & 97.1\% & 99.0\% & 0.826 \\
s344 & 14,128 & 83 & 170.2 & 11,917 & 89 & 133.9 & 417 & 8 & 3,188 & 7 & 52.1 & 455.4 & 97.4\% & 91.6\% & 96.5\% & 99.0\% & 0.886 \\
mm4a & 5,741 & 9 & 637.9 & 1,458 & 10 & 145.8 & 215 & 9 & 5,741 & 9 & 23.9 & 637.9 & 96.3\% & 0.0\% & 85.3\% & 96.6\% & 0.963 \\
s420 & 460 & 6 & 76.7 & 382 & 6 & 63.7 & 242 & 5 & 460 & 6 & 48.4 & 76.7 & 36.9\% & 0.0\% & 36.6\% & 47.4\% & 0.369 \\
s208 & 3,766 & 6 & 627.7 & 3,766 & 6 & 627.7 & 522 & 5 & 3,766 & 6 & 104.4 & 627.7 & 83.4\% & 0.0\% & 86.1\% & 86.1\% & 0.834 \\
s27 & 845 & 10 & 84.5 & 586 & 10 & 58.6 & 265 & 5 & 538 & 5 & 53.0 & 107.6 & 68.6\% & 50.0\% & 54.8\% & 75.4\% & 0.507 \\
\midrule
\textbf{Average} & 9,201.6 & 50.4 & 256.0 & 6,967.7 & 54.8 & 141.6 & 267.9 & 6.5 & 3,370.9 & 7.9 & 40.5 & 445.9 & 91.3\% & 62.8\% & 88.8\% & 93.9\% & 0.858 \\
\textbf{Median} & 6,621.0 & 34.0 & 181.9 & 4,234.5 & 35.5 & 116.2 & 259.5 & 5.0 & 3,138.0 & 6.0 & 36.8 & 445.9 & 96.6\% & 80.0\% & 95.5\% & 98.7\% & 0.906 \\
\textbf{StDev} & 7,584.7 & 49.7 & 193.3 & 7,080.6 & 51.6 & 124.2 & 167.0 & 2.5 & 2,084.7 & 5.1 & 21.0 & 190.2 & 15.4\% & 36.4\% & 16.5\% & 13.1\% & 0.162 \\
\bottomrule
\end{tabular}%
}
\end{table}

\begin{table}[ht]
\centering
\caption{Experimental results on ISCAS benchmark graphs with
  \textbf{log-normal} weights. The table compares heuristic
  approximations $\lambda_{avg}$ and $\lambda_{geo}$ against cycle
  means of the max-weight ($L_w$) and max-length ($L_l$)
  cycles. Relative errors $\epsilon$ are calculated as percentages.}
\label{tab:heuristic_results_log}
\resizebox{\textwidth}{!}{%
\begin{tabular}{@{}lrrrrrr@{}}
\toprule
\multicolumn{1}{c}{Graph} & \multicolumn{1}{c}{$\lambda_{avg}$} & \multicolumn{1}{c}{$\lambda_{geo}$} & \multicolumn{1}{c}{$\epsilon_{avg}(L_w)$} & \multicolumn{1}{c}{$\epsilon_{geo}(L_w)$} & \multicolumn{1}{c}{$\epsilon_{avg}(L_l)$} & \multicolumn{1}{c}{$\epsilon_{geo}(L_l)$} \\
\midrule
mm30a & 181.3 & 105.1 & 16.8\% & 51.8\% & 94.5\% & 12.8\% \\
ecc & 336.0 & 92.6 & 132.0\% & 36.1\% & 133.8\% & 35.6\% \\
sbc & 374.0 & 159.9 & 54.7\% & 33.9\% & 217.8\% & 35.8\% \\
mm9b & 132.2 & 92.5 & 21.6\% & 45.1\% & 1.9\% & 31.3\% \\
mm9a & 222.7 & 119.8 & 6.1\% & 49.5\% & 94.2\% & 4.5\% \\
s838 & 366.2 & 112.8 & 48.8\% & 84.2\% & 361.8\% & 42.2\% \\
s641 & 248.9 & 163.6 & 39.9\% & 8.1\% & 82.3\% & 19.8\% \\
s526 & 214.3 & 98.2 & 17.4\% & 62.2\% & 59.1\% & 27.1\% \\
s526n & 275.0 & 103.0 & 50.6\% & 43.6\% & 148.0\% & 7.2\% \\
s444 & 214.5 & 129.2 & 22.1\% & 26.5\% & 116.6\% & 30.5\% \\
s400 & 279.5 & 151.0 & 34.7\% & 27.2\% & 152.0\% & 36.2\% \\
s382 & 265.2 & 120.2 & 46.4\% & 33.7\% & 85.1\% & 16.1\% \\
s349 & 174.4 & 124.0 & 72.5\% & 22.7\% & 72.5\% & 22.7\% \\
s344 & 253.8 & 154.0 & 49.1\% & 9.5\% & 89.5\% & 15.0\% \\
mm4a & 330.9 & 123.5 & 48.1\% & 80.6\% & 127.0\% & 15.3\% \\
s420 & 62.5 & 60.9 & 18.4\% & 20.6\% & 1.8\% & 4.4\% \\
s208 & 366.0 & 256.0 & 41.7\% & 59.2\% & 41.7\% & 59.2\% \\
s27 & 80.3 & 75.5 & 5.0\% & 10.6\% & 37.0\% & 28.9\% \\
\midrule
\textbf{Average} & 243.2 & 124.5 & 40.3\% & 39.2\% & 106.5\% & 24.7\% \\
\textbf{Median} & 251.3 & 121.1 & 40.8\% & 35.0\% & 91.8\% & 24.9\% \\
\textbf{StDev} & 93.5 & 45.2 & 29.4\% & 22.0\% & 83.5\% & 14.6\% \\
\bottomrule
\end{tabular}%
}
\end{table}

The log-normal results highlight the limitations of the heuristics
under skewed distributions. Unlike the uniform case, the log-normal
distribution is strictly positive and right-skewed. Consequently, the
arithmetic mean of the range $[\lambda_{min}, \lambda_{max}]$ (i.e.,
$\lambda_{avg}$) significantly overestimates the true population mean
(and thus the cycle means $\lambda(L_w)$ and $\lambda(L_l)$). As seen
in Table~\ref{tab:heuristic_results_log}, the median relative error
for $\lambda_{avg}$ increases to 40.8\% for max-weight cycles and a
drastic 91.8\% for max-length cycles.

However, the geometric mean $\lambda_{geo}$ substantially improves
accuracy for skewed distributions. For max-weight cycles, the median
error decreases modestly from 40.8\% to 35.0\%. For max-length cycles,
the improvement is dramatic: the median error drops from 91.8\% to
24.9\%---a nearly fourfold reduction. This confirms that
$\lambda_{geo}$ is more appropriate for right-skewed distributions
where the population mean lies below the center of the range.

\section{Limitations}
\label{sec:limits}

While our experimental results demonstrate the efficacy of the
proposed approach, we acknowledge several limitations in the current
study, ranging from empirical constraints to theoretical boundaries.

\paragraph{Sensitivity to Weight Distributions.}
As discussed in Section~\ref{sec:benchmark-results}, the performance
of our heuristic estimates exhibits a dependency on the underlying
edge weight distribution. While the method achieves accurate estimates
on symmetric (uniform) distributions, we observed degradation when
applied to graphs with skewed (log-normal) weight distributions. This
suggests that the heuristic is less effective when edge weight
variance is high.

\paragraph{Generalization to Other Domains.}
Our evaluation used ISCAS benchmark circuits, real circuit topologies
with synthetically assigned edge weights. We restricted our evaluation
to a subset of instances because exact cycle enumeration did not
terminate on circuits that were either larger or contained too many
cycles. While ISCAS circuits are representative of digital hardware,
the structural properties of graphs in other domains (e.g., biological
pathways, social networks) may differ, potentially affecting the size
of the critical subgraph and hence the tightness of our bounds.

\paragraph{Scalability and Hardware Constraints.}
Our experiments were constrained by the runtime of exhaustive cycle
enumeration, which was required to establish ground truth. As a
result, we restricted our analysis to graphs with at most one million
cycles. The optimum cycle means themselves were computed in under one
second for all benchmarks; practical applications of our bounds would
not require enumeration and thus would scale to much larger instances.

\paragraph{Complexity of Optimal Critical Cycles.}
Regarding the theoretical application of our bounds, determining the
``best'' critical cycles (e.g., the longest possible max-critical
cycle) is itself NP-hard. Nonetheless, since critical cycles lie
within the critical subgraph, restricting search to this subgraph can
substantially reduce the problem size. When the critical subgraph is
small enough, exhaustive cycle enumeration \citep{Jo75,HaJa08,BaSa25}
can become feasible.

\paragraph{Triviality on Uniform Weights.}
Finally, our bounds are trivial for uniformly weighted or unweighted
graphs: if all edges have weight 1, then $\lambda_{min} =
\lambda_{max} = 1$.

\section{Applications}
\label{sec:applications}

Despite the limitations regarding sensitivity to weight distributions
discussed in Section~\ref{sec:limits}, our proposed bounds remain
highly effective for a wide range of practical domains where edge
weights are bounded and the graph topology supports short critical
cycles. In this section, we highlight three specific applications
where these conditions are satisfied.

\paragraph{Application to Branch-and-Bound.}
A critical component of Branch-and-Bound algorithms for the longest
simple path or cycle is the \textit{admissible heuristic}: an
optimistic estimate of the remaining path extension. Our bounds
naturally serve as admissible heuristics in such frameworks
\citep{MoJaSaSe16,StPuFe14,CoStFe20}.  For example, when searching for
a cycle of length at most $k$, given a partial path of length $p$, at
most $k - p$ edges remain to complete the cycle. The product $(k - p)
\cdot \lambda_{max}$ upper-bounds the weight contribution of these
remaining edges. While this estimate may loosen under heavy-tailed
weight distributions, it guarantees admissibility, enabling the
pruning of branches that cannot exceed the current best solution.

\paragraph{Application to Longest Simple Paths.}
Our cycle bounds induce bounds on longest simple paths. Since any
cycle can be converted to a path by removing one edge, the properties
of the longest cycles ($L_w, L_l$) impose strict lower bounds on the
longest paths:

\begin{cor}[Path bounds]
Let $P_w$ be a max-weight simple path and $P_l$ be a max-length simple path. Then:
\begin{align}
    |P_l| &\geq |L_l| - 1 \geq (\max\limits_{C \in \mathcal{C}_{max}} |C|) - 1, \\
    w(P_w) &\geq w(L_w) - w_{max}.
\end{align}
\end{cor}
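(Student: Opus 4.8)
The plan is to prove each inequality by exhibiting an explicit construction that converts a cycle into a simple path, then invoke the optimality of $P_l$ and $P_w$ as the maximum over all such paths. The key observation is that removing any single edge from a simple cycle yields a simple path, since a simple cycle visits each vertex exactly once and deleting one edge breaks the closure without creating any repeated vertex.

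For the length bound, I would start from a max-length cycle $L_l$, which has $|L_l|$ edges and visits $|L_l|$ distinct vertices. Removing one edge produces a simple path on the same vertex set with exactly $|L_l| - 1$ edges. Since $P_l$ is the maximum-length simple path, $|P_l| \geq |L_l| - 1$. The second half of the chain, $|L_l| - 1 \geq (\max_{C \in \mathcal{C}_{max}} |C|) - 1$, is not immediate from the definition of $L_l$ alone: it would follow if $|L_l| \geq \max_{C \in \mathcal{C}_{max}} |C|$, which holds trivially because $L_l$ is the \emph{globally} longest cycle and every max-critical cycle is a cycle in $G$, so $|L_l| \geq |C|$ for all $C \in \mathcal{C}_{max}$. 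Taking the max over $\mathcal{C}_{max}$ and subtracting $1$ gives the stated bound.

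For the weight bound, I would take a max-weight cycle $L_w$ and delete the single edge $e^\star \in L_w$ of largest weight, so that $w(e^\star) \leq w_{max}$. The resulting simple path has weight $w(L_w) - w(e^\star) \geq w(L_w) - w_{max}$. Since $P_w$ is the maximum-weight simple path, $w(P_w) \geq w(L_w) - w(e^\star) \geq w(L_w) - w_{max}$. (One could delete any edge and still get $w(P_w) \geq w(L_w) - w_{max}$ directly, since every edge weight is at most $w_{max}$; choosing the heaviest edge merely makes the intermediate step tightest, but is not needed for the stated bound.)

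The main obstacle, though minor, is the edge case of self-loops: if the max-length or max-weight cycle is a self-loop ($|C| = 1$), then ``removing one edge'' leaves a degenerate path of length $0$ (a single vertex), so the length bound reduces to $|P_l| \geq 0$ and the weight bound to $w(P_w) \geq w(L_w) - w_{max} = 0$, both of which hold vacuously. I would note this explicitly to keep the conversion argument valid in all cases. Beyond that, the proof is a direct application of the cycle-to-path reduction combined with the maximality of $L_l$ over all cycles and the max-critical cycle bound already established in Theorem~\ref{thm:max-weight}.
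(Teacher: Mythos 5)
Your proof is correct and takes essentially the same route as the paper, which justifies this corollary with exactly the cycle-to-path edge-removal observation (``any cycle can be converted to a path by removing one edge'') combined with the maximality of $P_l$ and $P_w$, and with $|L_l| \geq |C|$ for every cycle $C$ holding by definition of a max-length cycle. One small nit: your closing appeal to Theorem~\ref{thm:max-weight} is unnecessary and slightly misplaced (that theorem bounds $|L_w|$, not $|L_l|$); the definitional inequality you already stated is all that is needed.
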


In the case where $\lambda_{max} > 0$, we can substitute the bound
from Theorem~\ref{thm:max-weight} to obtain a computable lower bound
for the path weight:
\begin{equation}
    w(P_w) \geq (\max\limits_{C \in \mathcal{C}_{max}} |C|) \lambda_{min} - w_{max}.
\end{equation}

This corollary is particularly useful for initialization in
Branch-and-Bound algorithms \citep{MoJaSaSe16,StPuFe14,CoStFe20}. By
identifying a long or heavy critical cycle in polynomial time, one can
establish a high-quality lower bound (incumbent solution) for the
longest path problem before the search begins.

However, we note that these path bounds inherit the limitations
discussed in Section~\ref{sec:limits}. Critical cycles are typically
short (length 3--12 in our ISCAS experiments), while longest paths can
approach $n-1$ edges, making the length bound loose in sparse
graphs. Furthermore, the weight bound on $w(P_w)$ can become negative
if $w_{max}$ is sufficiently large (as in heavy-tailed distributions),
rendering it vacuous. Thus, these bounds are most powerful when the
graph contains long critical cycles and has bounded edge weights.

\paragraph{Application to Graph Structural Bounds.}
Literature on cycle length bounds is extensive
(\S~\ref{sec:related-work}). Given a structural bound that guarantees
a cycle of length $\ell$ for digraphs, our framework automatically
translates it into a weight bound $w(L) \geq \ell \lambda_{min}$. This
observation suggests that any advance in bounding cycle lengths for
specific classes of digraphs can be directly leveraged to bound cycle
weights via optimum cycle means.

\section{Related Work}
\label{sec:related-work}

\paragraph{Optimum Cycle Means.}
\citet{Da04} surveys $O(mn)$-time algorithms for computing optimum
cycle means and presents their experimental analysis results. The
efficient implementations of most of those algorithms are available in
\citep{Da15}.

\paragraph{Longest Paths and Cycles.}
The Longest Path/Cycle problem is a classical NP-hard problem
\citep{Ga79} and inapproximable within $n^{1-\epsilon}$
\citep{BjHuKh04}. Algorithmic approaches include color-coding
\citep{AlYuZw97}, which finds paths of length $O(\log n)$ in
polynomial time, and improvements by \citet{BjHu03}. For the problem
of finding paths or cycles of a specific length $k$, \citet{Mo85} and
\citet{Bo93} propose fixed-parameter tractable algorithms that run in
linear time for fixed $k$ but in exponential time for variable $k$
(specifically $O(k! 2^k n)$ \citep{Bo93}).

\paragraph{Structural Bounds.}
Literature on cycle length guarantees is extensive, e.g.,
\citet{ChZh88,Gh60,Th81} for directed graphs and
\citep{ErGa59,Di52,JaKrSo19,An22,FoGoSa24b} for undirected
graphs. While these works focus strictly on length, our algebraic
framework bridges these structural guarantees to weighted bounds, as
discussed in \S~\ref{sec:applications}.

\paragraph{Cycle Enumeration.}
\citet{Jo75} gives an output-sensitive algorithm for enumerating all
simple cycles; \citet{HaJa08} extend it to multigraphs. \citet{GuSu21}
address bounded-length enumeration.

\paragraph{Heuristic Search.}
\citet{MoJaSaSe16,StPuFe14,CoStFe20} develop A* and branch-and-bound
methods for longest simple paths using admissible heuristics. Our
bounds provide such heuristics via $\lambda_{max}$.

\section{Conclusions}
\label{sec:conclusions}

We have presented algebraic bounds for the weight and length of the
longest simple cycle based on optimum cycle means. While these bounds
do not yield constant-factor approximations, they are computable in
strongly polynomial time and provide necessary conditions for cycle
existence. Our experimental analysis on the ISCAS benchmarks yields
two practical conclusions:

First, the strict algebraic lower bounds are often loose (median
errors of 80--99\%) regardless of the weight distribution, as critical
cycles ($C_{max}$) are typically short while longest cycles ($L_w$)
are long. However, the quality of \textit{heuristic estimates}
depends heavily on the distribution of edge weights. For symmetric
uniform distributions, the arithmetic mean $\lambda_{avg}$ is an
accurate estimate, achieving median errors of 12.7\% ($L_w$) and
5.8\% ($L_l$). For skewed log-normal distributions (typical of circuit
delays), $\lambda_{avg}$ degrades significantly, but the geometric
mean $\lambda_{geo}$ provides a substantial improvement---reducing
median errors from 91.8\% to 24.9\% for max-length cycles.

Second, we observed that max-weight and max-length cycles are
structurally similar and often identical, particularly in the uniform
case. This suggests that algorithms designed to find long paths are
likely to find heavy paths in practice, and vice versa. Future work
lies in integrating these bounds into a live Branch-and-Bound solver
to empirically measure the reduction in search nodes and total runtime
for large-scale instances.

\bibliographystyle{plainnat}
\bibliography{graph}

\begin{thebibliography}{31}
\providecommand{\natexlab}[1]{#1}
\providecommand{\url}[1]{\texttt{#1}}
\expandafter\ifx\csname urlstyle\endcsname\relax
  \providecommand{\doi}[1]{doi: #1}\else
  \providecommand{\doi}{doi: \begingroup \urlstyle{rm}\Url}\fi

\bibitem[Alon et~al.(1997)Alon, Yuster, and Zwick]{AlYuZw97}
Noga Alon, Raphael Yuster, and Uri Zwick.
\newblock Finding and counting given length cycles.
\newblock \emph{Algorithmica}, 17:\penalty0 209--23, 1997.

\bibitem[Anastos(2022)]{An22}
Michael Anastos.
\newblock An improved lower bound on the length of the longest cycle in random
  graphs, 2022.
\newblock URL \url{https://arxiv.org/abs/2208.06851}.

\bibitem[Bauernöppel and Sack(2025)]{BaSa25}
Frank Bauernöppel and Jörg-Rüdiger Sack.
\newblock Finding all bounded-length simple cycles in a directed graph --
  revisited, 2025.
\newblock URL \url{https://arxiv.org/abs/2512.08392}.

\bibitem[Bj\"{o}rklund and Husfeldt(2003)]{BjHu03}
Andreas Bj\"{o}rklund and Thore Husfeldt.
\newblock Finding a path of superlogarithmic length.
\newblock \emph{SIAM J. on Comput.}, 32\penalty0 (6):\penalty0 1395--1402,
  2003.

\bibitem[Bj\"{o}rklund et~al.(2004)Bj\"{o}rklund, Husfeldt, and
  Khanna]{BjHuKh04}
Andreas Bj\"{o}rklund, Thore Husfeldt, and Sanjeev Khanna.
\newblock Approximating longest directed paths and cycles.
\newblock In \emph{Proc. 31st Inter. Colloquium on Automata, Languages and
  Programming (ICALP)}, pages 222--33, Jul 2004.

\bibitem[Bodlaender(1993)]{Bo93}
Hans~L. Bodlaender.
\newblock On linear time minor tests with depth-first search.
\newblock \emph{J. Algorithms}, 14\penalty0 (1):\penalty0 1--23, 1993.

\bibitem[Brglez and Fujiwara(1985)]{BrFu1985}
Franc Brglez and Hideo Fujiwara.
\newblock A neutral netlist of 10 combinational benchmark circuits and a target
  translator in {FORTRAN}.
\newblock In \emph{Proceedings of the IEEE International Symposium on Circuits
  and Systems (ISCAS)}, pages 695--698, 1985.

\bibitem[Brglez et~al.(1989)Brglez, Bryan, and Kozminski]{BrBrKo1989}
Franc Brglez, David Bryan, and Krzysztof Kozminski.
\newblock Combinational profiles of sequential benchmark circuits.
\newblock In \emph{Proceedings of the IEEE International Symposium on Circuits
  and Systems (ISCAS)}, volume~3, pages 1929--1934, 1989.

\bibitem[Chen and Zhang(1988)]{ChZh88}
Zhi-bo Chen and Fu-ji Zhang.
\newblock Bounds of the longest directed cycle length for minimal strong
  digraphs.
\newblock \emph{Discrete Mathematics}, 68\penalty0 (1):\penalty0 9--13, 1988.

\bibitem[Cohen et~al.(2020)Cohen, Stern, and Felner]{CoStFe20}
Yossi Cohen, Roni Stern, and Ariel Felner.
\newblock Solving the longest simple path problem with heuristic search.
\newblock In \emph{Proc. of the 30th Int. Conf. on Automated Planning and
  Scheduling (ICAPS)}, volume~30, pages 75--79. AAAI Press, 2020.

\bibitem[Dasdan(2004)]{Da04}
Ali Dasdan.
\newblock Experimental analysis of the fastest optimum cycle ratio and mean
  algorithms.
\newblock \emph{{ACM} Trans. Design Autom. Electr. Syst.}, 9\penalty0
  (4):\penalty0 385--418, Oct 2004.

\bibitem[Dasdan(2014)]{Da14}
Ali Dasdan.
\newblock Graph benchmarks, in dimacs format, with their layout images from
  {Graphviz}, 2014.
\newblock URL \url{https://github.com/alidasdan/graph-benchmarks}.
\newblock Accessed: 2026-01-10.

\bibitem[Dasdan(2015)]{Da15}
Ali Dasdan.
\newblock Optimum cycle ratio algorithms, implemented in {C++}, 2015.
\newblock URL \url{https://github.com/alidasdan/optimum-cycle-mean-algorithms}.
\newblock Accessed: 2025-12-28.

\bibitem[Dirac(1952)]{Di52}
Gabriel~A. Dirac.
\newblock Some theorems on abstract graphs.
\newblock \emph{Proceedings of the London Mathematical Society}, s3-2\penalty0
  (1):\penalty0 69--81, 1952.

\bibitem[Erd{\H{o}}s and Gallai(1959)]{ErGa59}
Paul Erd{\H{o}}s and Tibor Gallai.
\newblock On maximal paths and circuits of graphs.
\newblock \emph{Acta Mathematica Academiae Scientiarum Hungaricae}, 10\penalty0
  (3--4):\penalty0 337--356, 1959.

\bibitem[Fomin et~al.(2024)Fomin, Golovach, Sagunov, and Simonov]{FoGoSa24b}
Fedor~V. Fomin, Petr~A. Golovach, Danil Sagunov, and Kirill Simonov.
\newblock Algorithmic extensions of {Dirac's} theorem, 2024.
\newblock URL \url{https://arxiv.org/abs/2011.03619}.

\bibitem[Garey and Johnson(1979)]{Ga79}
Michael~R. Garey and David~S. Johnson.
\newblock \emph{Computers and Intractability: A Guide to the Theory of
  NP-Completeness}.
\newblock W. H. Freeman, New York, NY, USA, 1979.

\bibitem[Ghouila-Houri(1960)]{Gh60}
Alain Ghouila-Houri.
\newblock Une condition suffisante d'existence d'un circuit hamiltonien.
\newblock \emph{Comptes Rendus de l'Acad{\'e}mie des Sciences Paris},
  251:\penalty0 495--497, 1960.

\bibitem[Gupta and Suzumura(2021)]{GuSu21}
Anshul Gupta and Toyotaro Suzumura.
\newblock Finding all bounded-length simple cycles in a directed graph, May
  2021.
\newblock URL \url{https://arxiv.org/abs/2105.10094}.

\bibitem[Hansen et~al.(1999)Hansen, Yalcin, and Hayes]{HaYaHa1999}
Michael~C. Hansen, Hakan Yalcin, and John~P. Hayes.
\newblock Unveiling the {ISCAS}-85 benchmarks: A case study in reverse
  engineering.
\newblock \emph{IEEE Design \& Test of Computers}, 16\penalty0 (3):\penalty0
  72--80, 1999.

\bibitem[Hawick and James(2008)]{HaJa08}
Kenneth Hawick and Heath~A. James.
\newblock Enumerating circuits and loops in graphs with self-arcs and
  multiple-arcs.
\newblock In \emph{Proc. of Int. Conf. on Foundations of Computer Science
  (FCS)}, pages 14--20, Jul 2008.

\bibitem[Jansen et~al.(2019)Jansen, Kozma, and Nederlof]{JaKrSo19}
Bart M.~P. Jansen, László Kozma, and Jesper Nederlof.
\newblock Hamiltonicity below {Dirac's} condition.
\newblock In \emph{Proc. 45th Inter. Workshop on Graph-Theoretic Concepts in
  Computer Science (WG)}, pages 27--39, 2019.
\newblock URL \url{https://arxiv.org/abs/1902.01745}.

\bibitem[Johnson(1975)]{Jo75}
David~B. Johnson.
\newblock Finding all the elementary circuits of a directed graph.
\newblock \emph{{SIAM} J. Comput}, 4\penalty0 (1):\penalty0 77--84, Mar 1975.

\bibitem[Khandelwal and Srivastava(2003)]{KhSr03}
Vishal Khandelwal and Ankur Srivastava.
\newblock Delay and slew metrics using the lognormal distribution.
\newblock In \emph{Proc. 40th Annual Design Automation Conference (DAC)}, pages
  356--359. ACM, 2003.

\bibitem[Mehlhorn and Sanders(2008)]{MeSa08}
Kurt Mehlhorn and Peter Sanders.
\newblock \emph{Algorithms and Data Structures: The Basic Toolbox}, chapter
  Graph Representation, pages 167--74.
\newblock Springer, 1 edition, May 2008.

\bibitem[Monien(1985)]{Mo85}
Burkhard Monien.
\newblock How to find long paths efficiently.
\newblock \emph{Annals of Discrete Mathematics}, 25:\penalty0 239--54, 1985.

\bibitem[Morrison et~al.(2016)Morrison, Jacobson, Sauppe, and
  Sewell]{MoJaSaSe16}
David~R. Morrison, Sheldon~H. Jacobson, Jason~J. Sauppe, and Edward~C. Sewell.
\newblock Branch-and-bound algorithms: A survey of recent advances in
  searching, branching, and pruning.
\newblock \emph{Discrete Optimization}, 19:\penalty0 79--102, Feb 2016.

\bibitem[Orlin and Sedeno-Noda(2017)]{OrSe2017}
James Orlin and Antonio Sedeno-Noda.
\newblock An {O(nm)} time algorithm for finding the min length directed cycle
  in a graph.
\newblock In \emph{Proc. 28th Annual ACM-SIAM Symp. Discrete Algorithms
  (SODA)}, pages 1866--79, Jan 2017.

\bibitem[Schauer(2008)]{Sc08}
J.~Schauer.
\newblock cycles\_hawick\_james: Implementation of {Hawick} and {James'}
  algorithm in {D}, 2008.
\newblock URL \url{https://github.com/josch/cycles_hawick_james}.
\newblock Accessed: 2025-12-28.

\bibitem[Stern et~al.(2011)Stern, Puzis, and Felner]{StPuFe14}
Roni Stern, Rami Puzis, and Ariel Felner.
\newblock Potential search for the longest simple path problem.
\newblock In \emph{Proc. of the 21th Int. Conf. on Automated Planning and
  Scheduling (ICAPS)}, volume~21, pages 234--241. AAAI Press, 2011.

\bibitem[Thomassen(1981)]{Th81}
Carsten Thomassen.
\newblock Long cycles in digraphs.
\newblock \emph{Proceedings of the London Mathematical Society}, 42\penalty0
  (2):\penalty0 231--251, 1981.

\end{thebibliography}

\end{document}